\documentclass[12pt]{article} 
\usepackage[sectionbib]{natbib}
\usepackage{array,epsfig,fancyhdr,rotating}
\usepackage[]{hyperref}  
\usepackage{sectsty, secdot}
\usepackage[normalem]{ulem}
\usepackage{orcidlink}
\usepackage{lmodern}
\sectionfont{\fontsize{12}{14pt plus.8pt minus .6pt}\selectfont}
\renewcommand{\theequation}{\thesection\arabic{equation}}
\subsectionfont{\fontsize{12}{14pt plus.8pt minus .6pt}\selectfont}
\usepackage{geometry}
\usepackage{setspace}
\usepackage{amsmath}
\usepackage{amssymb}
\usepackage{color}
\usepackage{amsfonts}
\usepackage{multirow}
\usepackage{mathdots}
\usepackage{amsthm}
\usepackage{bm}
\newtheorem{theorem}{Theorem}

\newtheorem{proposition}{Proposition}
\theoremstyle{definition}

\newtheorem{remark}{Remark}
\newtheorem{assumption}{Assumption}
\let\customlarge\large
\renewcommand{\customlarge}{\fontsize{13.5pt}{13pt}\selectfont}

\begin{document}


\renewcommand{\baselinestretch}{2}

\markright{ \hbox{\footnotesize\rm Statistica Sinica
}\hfill\\[-13pt]
\hbox{\footnotesize\rm
}\hfill }

\markboth{\hfill{\footnotesize\rm GUANNAN ZHAI AND FEIFANG HU} \hfill}
{\hfill {\footnotesize\rm VALID INFERENCE FOR MULTI-ARM TRIALS} \hfill}

\renewcommand{\thefootnote}{}
$\ $\par


\fontsize{12}{14pt plus.8pt minus .6pt}\selectfont \vspace{0.8pc}
\centerline{\customlarge\bf Valid test for multi-arm trials with generalized }
\vspace{2pt} 
    \centerline{\customlarge\bf linear models under covariate-adaptive randomization }
\vspace{.4cm} 
\centerline{
Guannan Zhai\,\orcidlink{0009-0004-6773-9640}$^{1}$,
Feifang Hu\,\orcidlink{0000-0002-9811-2910}$^{1}$
}
\vspace{.4cm} 
\centerline{\it $^{1}$Department of Statistics, George Washington University, U.S.A.}
 \vspace{.55cm} \fontsize{9}{11.5pt plus.8pt minus.6pt}\selectfont


\begin{quotation}
\noindent {\it Abstract:}
Modern medical research, such as dose-finding studies, seamless trials, and shared control designs, often involves comparing multiple treatments simultaneously. Despite its wide applications, most research focuses on continuous endpoints, leaving the inference for general outcome types in high demand. In this article, we propose a new inference method for conducting multiple-treatment comparisons involving endpoints within the generalized linear model (GLM) framework under covariate-adaptive randomization (CAR). First, we investigate the asymptotic properties of the standard \textcolor{black}{Wald z-statistics (z-scores)} in multi-arm trials, highlighting issues when the working model is misspecified, particularly through omitted covariates. Our theoretical findings reveal that these \textcolor{black}{z-scores} do not consistently converge to a standard multivariate normal distribution, leading to either conservative or inflated Type I error rates, depending on the specific GLM endpoint. Second, based on these theoretical results, we develop adjusted test statistics to correct the distributional problems. To appropriately control the family-wise Type I error rate inherent in multi-arm comparisons, we incorporate our adjusted statistics with Simes-type multiple-testing procedures. This robust inference method can effectively control Type I error while potentially improving power. Extensive simulation studies and a real-world application to a metastatic breast cancer trial confirm the effectiveness and practicality of our approach.

\vspace{9pt}
\noindent {\it Key words and phrases:}
Clinical trials, Covariate-adaptive randomization, Generalized linear models, Multi-arm trials, Shared control, Type I error
\par
\end{quotation}\par

\def\thefigure{\arabic{figure}}
\def\thetable{\arabic{table}}

\renewcommand{\theequation}{\thesection.\arabic{equation}}

\fontsize{12}{14pt plus.8pt minus .6pt}\selectfont

\section{Introduction}

Clinical trials are widely utilized in medical research to evaluate treatment efficacy and establish causal relationships between interventions and health outcomes. Traditional clinical trials typically employed simple two-arm comparisons with clearly defined continuous endpoints. However, these conventional designs have become inadequate for the complexity of modern medical interventions and increasing demand for trial efficiency and adaptability. Consequently, modern clinical trials have evolved towards more sophisticated designs. For example, dose-finding studies, seamless trials, and shared control designs are widely employed, often simultaneously comparing multiple treatments. Despite their wide application, statistical inference methods remain predominantly focused on two-arm trials with continuous outcomes, leaving methods for general outcomes in multi-arm settings underdeveloped.

Several studies have begun addressing inference issues in complex trial designs under covariate-adaptive randomization (CAR). CAR adaptively assigns patients to treatment groups based on previous patient allocations and their baseline covariates (e.g., age, gender, disease severity, pre-treatment). Its randomization framework and associated theoretical properties have been extensively studied, providing robust foundations for their use in clinical trials \citep{pocock1975sequential, HuHuASYMPTOTIC, hu2020theory, liu2022balancing, hu2023multi, ma2023carat, ma2024new, zhao2024consistent}. Consequently, CAR methods are now widely adopted in practice to ensure efficient and precise treatment comparisons.

Building upon CAR’s extensive theoretical foundation, recent research has started addressing inference issues in multi-arm trials. \textcolor{black}{For instance, \citet{bugni2019inference} extend their earlier work on two-arm trials \citep{bugni2018inference} to multi-arm settings. They develop robust OLS-based inference for average treatment effects under stratified randomization, using linear working models such as fully saturated and strata fixed-effects regressions.} Additionally, \citet{ma2022seamless} investigated adjusted inference methods for multi-arm seamless phase II/III designs with continuous endpoints under CAR. Related work for non-continuous outcomes under CAR has mostly focused on two-arm trials \citep{ma2022regression, shao2013validity, yi2020cox, ye2020robust, li2021}. \textcolor{black}{Overall, robust GLM-based inference for multi-arm trials with non-continuous endpoints, especially when randomization covariates are omitted from the analysis model, remains underdeveloped.}

Currently, a critical gap remains in statistical methodologies for multi-arm trials with non-continuous outcomes. A particularly important class of such non-continuous endpoints includes binary responses and count data, which are fundamental to clinical research and are usually analyzed within the GLM framework. The methodological gap stems from two fundamental challenges. First, unlike two-arm trials, multi-arm trials typically employ a shared control group, creating dependencies among treatment comparisons that invalidate traditional statistical approaches. Second, when analyzing data under CAR where the endpoints accommodate GLMs, the common practice of omitting covariates from the working model raises issues about the validity of standard inference methods.

To address these challenges, we develop a comprehensive framework for valid statistical inference in multi-arm trials with GLM-accommodating endpoints, particularly focusing on scenarios where the working model omits covariates. Our work makes three main contributions: (1) theoretically investigating the asymptotic properties of the standard Wald test under CAR in multi-arm trials with shared controls; (2) developing valid statistical inference methods that properly account for both the shared control structure and scenarios where working models cannot include all randomization covariates; and (3) establishing theoretical guarantees for the proposed methods under GLM settings. Our framework provides the first theoretical foundation for statistical inference in this important yet under-explored setting, while maintaining practical applicability in real clinical trials. 

The paper is organized as follows. Section \ref{sec: notation} introduces the GLM-based testing framework, including notation, working model, test statistic, and Type I error control. Section \ref{sec: theorem} presents our main theoretical results, examining test size under commonly used models (logistic, linear, Poisson, and exponential regression) and exploring extensions to non-canonical link functions. Section \ref{sec: new test} introduces an adjusted test that maintains valid Type I error control. Sections \ref{sec: simu} and \ref{sec: realtrial} evaluate our method through simulation studies and an application to a metastatic breast cancer trial, respectively. Section \ref{sec:discussion} concludes with a discussion of our findings.

\section{ Test under Generalized Linear Models}
\label{sec: notation}
\subsection{Framework and Notation} 
\label{subsec:notations and framework}
We consider a clinical trial with $K$ treatment groups ($K \geq 1$) and one control group, where a total of $N$ subjects are allocated to one of $(K+1)$ arms based on their covariate profiles via a multi-arm CAR procedure proposed by \cite{hu2023multi}. For each subject $i$, let $T_{ik}$ be the treatment assignment indicator, where $T_{ik} = 1$ if subject $i$ is assigned to arm $k$, and $0$ otherwise. Here, $k = 0$ denotes the control group and $k \in \{1, \ldots, K\}$ denotes the treatment groups. Let $\bm{Z}_i = (Z_{i1}, \ldots, Z_{iq})$ represent the covariate profile of subject $i$, and $y_i$ be their observed response under the assigned treatment. The true model is assumed to be: 
\begin{align}
\label{mod: true}
    \mathbb{E}\left[Y_i \mid \bm{T}_i, \bm{Z}_i \right] = h\left(\delta_0 + \delta_1 T_{i1} + \delta_2 T_{i2} + \ldots + \delta_K T_{iK} + \bm{\beta}^T\bm{Z}_i\right)
\end{align}
where $\bm{T}_i$ is the assignment vector of subject $i$, $\delta_k$ is treatment effect of group $k$, $\bm{\beta} = (\beta_1, \ldots, \beta_q)$ are coefficients for prognostic factors $\bm{Z}_i$. 

In the GLM framework considered in this paper, the function $h(\cdot)$ denotes the inverse of a monotonic link function, establishing the relationship between the linear predictor and the expected outcome. For example, $h(u)=u$ corresponds to linear regression models suitable for continuous outcomes, while $h(u)=\exp (u) /\{1+\exp (u)\}$ corresponds to logistic regression models commonly used for binary responses. Similarly, $h(u)=\exp (u)$ corresponds to Poisson regression models suitable for count data. Our theoretical results and proposed inference procedures hold under general assumptions on the link function $h(\cdot)$, including canonical as well as certain non-canonical link functions. Specific assumptions and conditions on $h(\cdot)$ required for our theoretical results are detailed in later sections.

The framework we discuss in this article is based on the following assumptions. 
\begin{assumption}
\label{asu: covariate independence}
    Throughout the article, we assume the covariates are independent and identically distributed for each patient. 
\end{assumption}
\begin{assumption}
\label{asu: covariate center mean}
    All covariates are centered with $\mathbb{E}(\bm{Z}) = \bm{0}$.
\end{assumption}
\begin{assumption}
\label{asu: discrete strata}
The covariates $\bm{Z}_i$ in model \eqref{mod: true} are discrete with
finitely many levels and coincide with the stratification covariates used by
the CAR procedure. Consequently, the number of strata is finite and
$\mathbb{E}(Y_i \mid \bm{Z}_i)$ is constant within each stratum.
\end{assumption}
\begin{assumption}
\label{asu: moments}
$\mathbb{E}(Y^2) < \infty$; the outcomes are conditionally independent across
subjects given the covariates and treatment assignments; and $h(\cdot)$ is
strictly monotone and twice continuously differentiable in a neighborhood of
$h^{-1}\{\mathbb{E}(Y)\}$ with $h'[h^{-1}\{\mathbb{E}(Y)\}] \neq 0$.
\end{assumption}

For the Assumption \ref{asu: covariate center mean}, even though the true mean of covariate $j$ is $c$ ($c \neq 0$), we can center it by subtracting $c$ from each value. \par 
In GLMs, the distribution of $Y$ given the linear predictor $\delta_0 + \delta_1 T_{i1} + \delta_2 T_{i2} + \ldots + \delta_K T_{iK} + \bm{\beta}^T\bm{Z}$ is assumed to follow an exponential family distribution, which has the form  
\begin{align}
    \exp \left\{\frac{Y \theta-b(\theta)}{\phi}+c(Y, \phi)\right\}
\end{align} 
where $\phi$ is the scale parameter, $\theta$ is the canonical parameter of the distribution, and $b(\cdot)$ and $c(\cdot)$ are known functions.  
\subsection{Working Model and Test} 
Before a trial begins, detailed planning is required, including how the qualified patients will be randomized and the statistical methodology for post-randomization data analysis. The post-randomization data analysis usually involves the inference method and construction of the associated test statistic. A well-established statistical method can effectively control the Type I error rate. More specifically, consider a test statistic $S$ constructed from the observed outcome data of patients, where the null hypothesis is rejected if and only if $|S|>z_{\alpha / 2}$. Under $H_0$, a valid well-controlled test statistic maintains a Type I error rate close to nominal significance level $\alpha$. When all strata factors are used in the post-randomization analysis, the  test statistic is well-controlled. \textcolor{black}{However, the covariates used in CAR may be partially or fully omitted from the post-randomization data analysis due to simplicity of a testing procedure and some practical reasons. \textcolor{black}{For example, when $Z_i$ has many levels (e.g., study site/center) or when stratification leads to sparse stratum-by-treatment cells.} In such settings, fully adjusted GLM fitting and standard error estimation can be numerically unstable.} Hence, it is necessary to study the inference properties of the standard tests when a working model with omitted covariates is used. Consider the following working model:
\begin{align}
\label{mod:working}
    \mathbb{E}\left[Y_i \mid \bm{T}_i\right] = h\left(\delta_0 + \delta_1 T_{i1} + \delta_2 T_{i2} + \ldots + \delta_K T_{iK} \right)
\end{align}
where the link function and conditional distribution of $Y$ given linear predictor are the same as in the true model. 

To compare each treatment with the control based on the working model (\ref{mod:working}), we test the individual null hypotheses $H_{0,k}: \delta_k = 0$ for $k = 1,\ldots,K$ and obtain their corresponding p-values. Since multiple comparisons are conducted simultaneously, there is a risk of family-wise Type I error inflation that could lead to false-positive findings. To control this risk and ensure the validity of our results, we implement appropriate multiplicity adjustment methods, which are detailed in Section \ref{subsec:control FWER}. 
The multivariate test can be formed as follows: 
\begin{align}
\label{hyps}
    H_{0k}: \delta_k = 0 \quad \quad k = 1, \ldots, K
\end{align}
with the test statistics 
\begin{align}
\label{Test}
    S_k = \frac{\hat \delta_k}{\text{se}(\hat \delta_k)}
\end{align}

\begin{remark}
\label{rmd:Test}
    The commonly used test statistic for the hypothesis (\ref{hyps}) is normalized to have a unit variance.
\end{remark}

Suppose that $N$ patients have been enrolled in the trial via CAR. For the $i$th patient, we observe $(y_i, T_{ik}, \bm{z}_i)$ as a realization of $(Y, T, Z)$. The log-likelihood function of the observed responses can be written as:

\begin{align}
    \log \prod_{i=1}^N p(y_i \mid \theta_i) = \sum_{i=1}^N \left\{ \frac{\theta_i y_i  - b(\theta_i)}{\phi} + c\left( y_i, \phi\right) \right\}
\end{align}

The maximum likelihood estimator (MLE) takes the form:
\begin{align*}
\hat{\delta}_0 + \hat{\delta}_k = h^{-1}\left( \frac{\sum_{i=1}^N y_iT_{ik}}{n_k}\right)
\end{align*}
From this, we can derive:
\begin{align}
\label{eqn:mle}
    \hat{\delta}_0 &= h^{-1}\left( \frac{\sum_{i=1}^N y_iT_{i0}}{n_0}\right)\\
    \hat{\delta}_k &= h^{-1}\left( \frac{\sum_{i=1}^N y_iT_{ik}}{n_k}\right) -  h^{-1}\left( \frac{\sum_{i=1}^N y_iT_{i0}}{n_0}\right)
\end{align}
where $h^{-1}(\cdot)$ is the link function and $n_k = \sum_{i=1}^N T_{ik}$ denotes the number of patients in group $k$. \\
These estimators form the basis for our hypothesis testing framework, which we next introduce with proper multiplicity adjustment.

\begin{remark}
\textcolor{black}{$\delta_k$ summarizes the treatment effect on the scale of the link function.
For example, with a logit link and binary outcomes, $\exp(\delta_k)$ is the odds ratio comparing the marginal event probabilities under treatment $k$ and control, that is,
$\exp(\delta_k)=\frac{p_k/(1-p_k)}{p_0/(1-p_0)}$ with $p_k=\mathbb{P}\{Y_i(k)=1\}$ and
$p_0=\mathbb{P}\{Y_i(0)=1\}$. With a log link for count outcomes, $\exp(\delta_k)$ is the ratio of marginal mean potential outcomes, namely $\exp(\delta_k)=\mathbb{E}\{Y_i(k)\}/\mathbb{E}\{Y_i(0)\}$. These link-scale contrasts are often used to report efficacy for binary and count endpoints, since they respect natural constraints of the outcome and provide interpretable relative effect measures. When an outcome-scale contrast is of interest, it can be obtained from the implied marginal means, for example $\mathbb{E}\{Y_i(k)\}-\mathbb{E}\{Y_i(0)\}=h(\delta_0+\delta_k)-h(\delta_0)$. In this paper, our focus is valid inference for the prespecified link-scale estimand $\delta_k$ under CAR when covariates are omitted from the working mean model.}
\end{remark}

\subsection{Control of Family-wise Type I error Rate under Multiplicity} \label{subsec:control FWER}
Comparisons across different treatment arms and a control lead to potential inflation of the family-wise Type I error rate as we mentioned previously. For example, in a dose-finding trial, multiple effective doses of a new drug are commonly compared with a single control, resulting in simultaneous testing of multiple null hypotheses, denoted as $H_{01}, H_{02}, \ldots, H_{0K}$. A critical issue here is controlling the family-wise Type I error rate (FWER). Hence, multiplicity adjustment methods that ensure the probability of rejecting at least one true null hypothesis does not exceed the predetermined significance level $\alpha$ are essential \citep{ryan1959multiple, hochberg1987multiple, bauer1991multiple}. Existing literature offers several methods, including Bonferroni, Holm, Simes, and Dunnett adjustments. More advanced approaches are reviewed by \citet{alosh2014advanced}.

In this article, we employ Simes-type procedures
\citep{simes1986improved, hochberg1988sharper, sarkar1997simes}, due to their
simplicity, greater statistical power compared to the Bonferroni and Holm
methods, and their validity under positively dependent test statistics. While
the Dunnett method is specifically designed for multiple comparisons against
a single control group, it requires stronger assumptions such as normality
and equal variance, limiting its applicability in more generalized or
exploratory contexts.

It is important to distinguish two inferential goals. The first is testing
the global null hypothesis $H_0 = \bigcap_{k=1}^K H_{0k}$, that is, detecting
whether any treatment differs from the control. For $K$ null hypotheses with
p-values $p_1, \ldots, p_K$ and ordered values
$p_{(1)} \leq \cdots \leq p_{(K)}$, the Simes test rejects the global null
hypothesis if there exists any $i \in \{1,\ldots,K\}$ such that
\begin{equation}
p_{(i)} \leq \frac{i}{K}\, \alpha .
\end{equation}
The second goal is drawing conclusions about the individual hypotheses
$H_{0k}$ with strong control of the FWER, that is, control under every
configuration of true and false null hypotheses. For this purpose we use
Hochberg's step-up procedure \citep{hochberg1988sharper}, which applies the
Simes critical constants in a step-up manner: let $j^*$ be the largest index
$j$ such that $p_{(j)} \leq \alpha/(K-j+1)$; then reject
$H_{0(1)}, \ldots, H_{0(j^*)}$, and reject nothing if no such $j$ exists.

\begin{proposition}[Validity of Simes-type procedures]
\label{prop:simes-validity}
Suppose that either
\begin{enumerate}
\item[(i)] the test statistics (or the corresponding p-values) are
independent; or
\item[(ii)] the test statistics exhibit the positive regression dependence on
subsets (PRDS) property: for every subset $I \subseteq \{1, \dots, K\}$ of
true null hypotheses and every increasing function $\varphi$, the conditional
expectation
$\mathbb{E}\left[\varphi\left(p_i, i \in I\right) \mid p_j \leq t\right]$
is non-decreasing in $t$ for each $j \notin I$.
\end{enumerate}
Then the Simes test of the global null hypothesis has size at most $\alpha$,
i.e., $P(\text{reject } H_0 \mid H_0 \text{ true}) \leq \alpha$, and
Hochberg's step-up procedure controls the family-wise error rate strongly,
i.e., $P(\text{reject at least one true } H_{0k}) \leq \alpha$ under every
configuration of true null hypotheses, provided the p-values of the true null
hypotheses are valid and satisfy the dependence condition.
\end{proposition}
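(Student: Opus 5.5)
The plan is to follow the classical route of \citet{simes1986improved} and \citet{sarkar1997simes}, treating case (i) as a special case of (ii), since independent p-values trivially satisfy PRDS. Throughout we use only that each true-null p-value is valid, $P(p_i \le t)\le t$ for all $t\in[0,1]$.

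\textbf{Step 1 (global null).} Under $H_0$ all $K$ nulls are true. Write the Simes rejection event as a disjoint union
\begin{align*}
R=\bigcup_{i=1}^K\{p_{(i)}\le i\alpha/K\}=\bigcup_{r=1}^K E_r ,
\end{align*}
where $E_r$ is the event that exactly $r$ is the largest index with $p_{(r)}\le r\alpha/K$. Let $t_r=r\alpha/K$. On $E_r$ exactly $r$ of the p-values are at most $t_r$; otherwise a larger index would qualify. Hence
\begin{align*}
P(R)=\sum_{r=1}^K\frac1r\sum_{j=1}^K P\bigl(p_j\le t_r,\;E_r\bigr)=\sum_{j=1}^K\sum_{r=1}^K\frac1r P(p_j\le t_r)\,P(E_r\mid p_j\le t_r).
\end{align*}
Next we rewrite $E_r$ in terms of the set $D_r=\{\text{at least } r \text{ of the } p_i \text{ lie in } [0,t_r]\}$ restricted to the other p-values. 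Given $p_j\le t_r$, the event that at most $r-1$ of the remaining p-values satisfy the thresholds $t_{r+1},\dots$ is determined by an increasing function of $(p_i)_{i\ne j}$. PRDS states that $P(\text{increasing set}\mid p_j\le t)$ is nondecreasing in $t$. Applying this with $t_r\le t_{r+1}$ and using $P(p_j\le t_r)\le t_r=r\alpha/K$, the double sum telescopes, as in Sarkar's proof, to at most $\sum_j \alpha/K=\alpha$. In the independent case the conditional probabilities do not depend on the conditioning, and the bound follows directly, recovering Simes' original argument.

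\textbf{Step 2 (Hochberg, strong control).} Let $I$ be the set of true nulls, with $|I|=m$. If Hochberg's procedure rejects some $H_{0i}$ with $i\in I$, then there is a $j^*$ with $p_{(j^*)}\le\alpha/(K-j^*+1)$, and at least $K-j^*+1$ hypotheses have p-value at least $p_{(j^*)}$. Because at most $K-m$ of those can be false nulls, the true-null p-values must satisfy $p_{(l)}^{I}\le l\alpha/m$ for some $l$, where these are the ordered p-values within $I$. This is a standard closure argument: Hochberg's procedure is a shortcut of closed testing with Simes local tests, or equivalently it is dominated by it. So the event is contained in Simes' rejection of $\bigcap_{i\in I}H_{0i}$ at level $\alpha$. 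Since PRDS on the subset $I$ is assumed, Step 1 applied to $\{p_i\}_{i\in I}$ bounds this probability by $\alpha$.

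\textbf{Main obstacle.} The delicate step is the PRDS monotonicity argument in Step 1, which requires expressing $E_r$ conditionally as a monotone event in the remaining p-values and ordering the thresholds correctly. To avoid re-deriving it in full, I would cite \citet{sarkar1997simes} (and Sarkar 1998) for the PRDS Simes inequality. The containment in Step 2 is routine counting once it is written out carefully.
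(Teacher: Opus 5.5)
Your proposal follows the same route as the paper's sketch: Simes' result for independent p-values (which you correctly note is a trivial instance of the PRDS condition), Sarkar's telescoping argument for the Simes inequality under PRDS, and the closure argument that reduces Hochberg's strong FWER control to a level-$\alpha$ Simes test of the intersection of the true nulls, written out in more detail than the paper's citations. The one thing to fix is the counting in Step 2. The fact that at most $K-m$ false nulls lie among the $K-j^*+1$ largest p-values does not by itself give the Simes rejection. The right count is this: if $u\ge 1$ true nulls are rejected, the other $m-u$ true nulls lie among the $K-j^*$ hypotheses ranked above $j^*$, so $m\le K-j^*+u\le u(K-j^*+1)$ and hence $p^{I}_{(u)}\le p_{(j^*)}\le \alpha/(K-j^*+1)\le u\alpha/m$ (this is why Hochberg is dominated by closed Simes testing; Hommel's procedure, not Hochberg's, is its exact shortcut).
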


\noindent\textbf{Proof sketch:}
Under independence, the size of the Simes test was established by
\citet{simes1986improved}. Under positive dependence, the Simes inequality
was extended by \citet{sarkar1997simes} and \citet{sarkar1998some}. The
strong FWER control of Hochberg's procedure follows from applying the Simes
inequality to the subset of true null hypotheses within the closed testing
framework \citep{hochberg1988sharper}.

\bigskip
With these conditions clearly stated, we proceed to present our primary theoretical results in the subsequent sections.

\section{Main Results} \label{sec: theorem}
\textcolor{black}{We place the formal definitions of the imbalance measures used under multi-arm CAR in the
Supplementary Materials. These include the overall imbalance $\bm D_N$, the marginal
imbalances $\bm D_N(r,m_r^*)$, and the within-stratum imbalances $\bm D_N(s)$.
The theorems below only use their stochastic orders and how they enter the asymptotic variance. We keep in the main text only the basic trial notation described in Section \ref{subsec:notations and framework}.}
\subsection{Main Theorem} \label{subsec: Theorems}
Throughout this section, Assumptions \ref{asu: covariate independence}--\ref{asu: moments} are in force.
Theorems \ref{thm2:sigmah} and \ref{thm4:sigmah_noncanonical} additionally
require that the CAR procedure is invariant under permutations of the $K+1$
treatment labels and that $D_a(s) = o_p(N)$ for every arm $a$ and stratum
$s$, as stated in Lemma 2 of the Supplementary Materials.
We now present the asymptotic distribution of the standard Wald test statistic under the working model \eqref{mod:working}. 
\begin{theorem}
\label{thm1}
    Suppose that a covariate-adaptive randomization procedure satisfies the condition that all within-stratum imbalances are bounded in probability; then under the global null hypothesis $H_0: \delta_1 = \cdots = \delta_K = 0$, as $N \to \infty$, we have 
    \begin{align*}
    (S_1, \ldots, S_K)  \xrightarrow{D} \mathcal{N}\left(\bm{0}, \frac{\mathbb{E}[\text{Var}(Y \mid Z)]}{\phi h'[h^{-1}\{\mathbb{E}(Y)\}]}\bm{V}\right)
    \end{align*}
where $\bm{V}$ is a K-dimensional square matrix with $v_{ij} = 1$ if $i = j$ and $v_{ij} = 1/2$ if $i \neq j$. 
\end{theorem}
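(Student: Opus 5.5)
Under $H_0$ the true mean satisfies $\mathbb{E}[Y_i\mid \bm T_i,\bm Z_i]=h(\delta_0+\bm\beta^T\bm Z_i)=:\mu(\bm Z_i)$, which does not depend on the arm. Let $\bar y_k=n_k^{-1}\sum_i y_iT_{ik}$ and $\mu^*=\mathbb{E}(Y)=\mathbb{E}\{\mu(\bm Z)\}$. The plan has four steps: a delta-method reduction to differences of arm means, a decomposition of $\bar y_k-\mu^*$ into a covariate part and a noise part, a conditional CLT for the noise, and the limit of the model-based standard error.

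\textbf{Step 1 (delta method).} The within-stratum imbalances are bounded in probability, so $n_k/N\to 1/(K+1)$ in probability. Together with a law of large numbers this gives $\bar y_k\to\mu^*$ for every $k$. By Assumption \ref{asu: moments}, the map $h^{-1}$ is differentiable at $\mu^*$. Applying the delta method to \eqref{eqn:mle} gives
\[
\sqrt N\,\hat\delta_k=\frac{\sqrt N(\bar y_k-\bar y_0)}{h'\{h^{-1}(\mu^*)\}}+o_p(1).
\]
So it suffices to find the joint limit of $\sqrt N(\bar y_k-\bar y_0)$ for $k=1,\dots,K$.

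\textbf{Step 2 (decomposition).} Write $y_i=\mu(\bm Z_i)+\varepsilon_i$, so that
\[
\bar y_k-\mu^*=n_k^{-1}\sum_i T_{ik}\{\mu(\bm Z_i)-\mu^*\}+n_k^{-1}\sum_iT_{ik}\varepsilon_i .
\]
By Assumption \ref{asu: discrete strata}, $\mu(\cdot)$ is constant within strata. Hence the first term equals $n_k^{-1}\sum_s N_k(s)\{\mu(s)-\mu^*\}$, where $N_k(s)$ is the number of arm-$k$ subjects in stratum $s$.

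Since the within-stratum imbalances are $O_p(1)$, we have $N_k(s)=N(s)/(K+1)+O_p(1)$ and $n_k=N/(K+1)+O_p(1)$. The first term therefore equals
\[
N^{-1}\sum_s N(s)\{\mu(s)-\mu^*\}+O_p(N^{-1}),
\]
which is the same for every arm. It cancels in $\bar y_k-\bar y_0$ up to $o_p(N^{-1/2})$. This cancellation is where CAR enters: the covariate-driven variability, which is $O_p(N^{-1/2})$ under simple randomization, disappears from the contrasts.

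\textbf{Step 3 (CLT for the noise; main obstacle).} The remaining task is the joint CLT for $\sqrt N\, n_k^{-1}\sum_iT_{ik}\varepsilon_i$, $k=0,\dots,K$. The difficulty is that CAR makes the assignments depend on all past covariates. I would condition on $\mathcal F=\sigma(\bm T_1,\bm Z_1,\dots,\bm T_N,\bm Z_N)$. Given $\mathcal F$, the $\varepsilon_i$ are independent, have mean zero, and have variance $\mathrm{Var}(Y\mid \bm Z_i)$ (Assumption \ref{asu: moments}). The $K+1$ arm sums involve disjoint sets of subjects, so they are conditionally independent.

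The conditional variance of arm $k$'s normalized sum is
\[
N n_k^{-2}\sum_s N_k(s)\,\mathrm{Var}(Y\mid s)\to (K+1)\,\mathbb{E}[\mathrm{Var}(Y\mid \bm Z)],
\]
again using the stratum counts. I would check the Lindeberg condition conditionally; it follows from $\mathbb{E}Y^2<\infty$ and the finitely many strata. A conditional characteristic-function argument plus dominated convergence then gives the unconditional limit.

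Write $\sigma^2_\varepsilon=\mathbb{E}[\mathrm{Var}(Y\mid\bm Z)]$. The vector $\sqrt N(\bar y_k-\bar y_0)_{k}$ is then asymptotically normal with mean zero, variances $2(K+1)\sigma^2_\varepsilon$, and covariances $(K+1)\sigma^2_\varepsilon$. The covariances come solely from the shared control term $\bar y_0$.

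\textbf{Step 4 (standard error).} The model-based standard error comes from the working-model Fisher information. With the canonical link, $\mathrm{Var}(Y)=\phi b''(\theta)=\phi h'(\eta)$, so the working-model variance of $\hat\delta_0+\hat\delta_k$ is $\phi/\{n_kh'(\hat\eta_k)\}$. Therefore
\[
N\,\mathrm{se}^2(\hat\delta_k)=\frac{N\phi}{n_kh'(\hat\eta_k)}+\frac{N\phi}{n_0h'(\hat\eta_0)}\to \frac{2(K+1)\phi}{h'\{h^{-1}(\mu^*)\}},
\]
by consistency of $\hat\eta_k$ and continuity of $h'$. Combining Steps 1, 3 and 4 via Slutsky's theorem, $S_k$ has asymptotic variance
\[
\frac{2(K+1)\sigma^2_\varepsilon/h'^2}{2(K+1)\phi/h'}=\frac{\sigma^2_\varepsilon}{\phi h'\{h^{-1}(\mathbb{E}Y)\}},
\]
and the correlation between $S_j$ and $S_k$ for $j\ne k$ is $1/2$. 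This is the stated limit with matrix $\bm V$.
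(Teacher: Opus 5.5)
Your proposal is correct and follows essentially the same route as the paper's argument. You reduce the closed-form MLE by the delta method to $\sqrt N(\bar y_k-\bar y_0)/h'[h^{-1}\{\mathbb{E}(Y)\}]$, split $y_i$ into $\nu_i=\mathbb{E}(Y_i\mid\bm Z_i)$ plus noise so that the $\nu_i$-part is controlled by the within-stratum imbalances $D_k(s)-D_0(s)$ and cancels when these are bounded, apply a CLT to the noise to get variance $2(K+1)\mathbb{E}[\operatorname{Var}(Y\mid Z)]$ with covariance coming only from the shared control, and use the canonical-link identity $b''=h'$ for the model-based standard error; your conditional-on-$\mathcal F$ Lindeberg argument is one valid way to carry out the noise CLT.
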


Based on Theorem \ref{thm1}, if the within-stratum imbalances are bounded in probability, the size of the test for treatment effect depends on a comparison between $\mathbb{E}[\text{Var}(Y\mid Z)]$ and $\phi h^{\prime}\left[h^{-1}\{E(Y)\}\right]$, i.e., in asymptotic sense, 
\begin{itemize}
    \item[-] The test is conservative if $\mathbb{E}[\text{Var}(Y \mid Z)] < \phi h'[h^{-1}\{\mathbb{E}(Y)\}]$. 
    \item[-] The test is valid if $\mathbb{E}[\text{Var}(Y \mid Z)] = \phi h'[h^{-1}\{\mathbb{E}(Y)\}]$
    \item[-] The test is inflated if $\mathbb{E}[\text{Var}(Y \mid Z)] > \phi h'[h^{-1}\{\mathbb{E}(Y)\}]$
\end{itemize}
Besides, the test statistics $S_1, S_2, \ldots, S_K$ are dependent because the same control group is used. 

We assume two key conditions which are also stated in \cite{li2021}:\\
Condition 1: The overall imbalance is bounded in probability $\bm{D}_N = O_p(\bm{1})$ \\ 
\noindent
Condition 2: \noindent
Under the global null hypothesis $H_0$, for each
$k=1,\ldots,K$, as $N\to\infty$,
\[
\frac{1}{\sqrt N}
\sum_{i=1}^N
(T_{ik}-T_{i0})\nu_i
\xrightarrow{D}
\mathcal N(0,\sigma_h^2),
\]
where
\[
\nu_i
=
\mathbb E(Y_i\mid\bm Z_i)
=
h(\delta_0+\bm\beta^T\bm Z_i).
\]
The $K$ treatment-control contrasts are assumed to converge jointly to a multivariate normal distribution.\\
These conditions are generally satisfied by CAR procedures that balance discrete covariates. For CAR designs where within-stratum imbalances are bounded in probability, the distribution in Condition 2 reduces to $O_p(1/\sqrt{N})$, implying $\sigma_h^2 = 0$. It has been proved that these conditions hold for Pocock and Simon's minimization, where the within-stratum imbalances, scaled by $1/\sqrt{N}$, follow an asymptotic joint normal distribution with positive variances \citep{hu2020theory, hu2023multi}. We now present Theorem \ref{thm2:sigmah}. 

\begin{theorem}
\label{thm2:sigmah}
Suppose that a covariate-adaptive randomization procedure satisfies Conditions 1 and 2. Then, under the global null hypothesis
$H_0:\delta_1=\cdots=\delta_K=0$, as $N\to\infty$,
\[
(S_1,\ldots,S_K)
\xrightarrow{D}
\mathcal{N}\left(
\bm{0},
\frac{
\mathbb{E}\!\left[\operatorname{Var}(Y\mid Z)\right]
+(K+1)\sigma_h^2/2
}{
\phi
h'\!\left[
h^{-1}\{\mathbb{E}(Y)\}
\right]
}
\bm{V}
\right),
\]
where $\bm{V}=(v_{ij})_{K\times K}$, with
$v_{ij}=1$ if $i=j$ and $v_{ij}=1/2$ if $i\neq j$.
\end{theorem}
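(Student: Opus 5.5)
The argument is a delta-method linearization of each $\hat\delta_k$, followed by an analysis of the arm-wise sums under CAR using Conditions 1 and 2; the variance algebra at the end must reproduce $(K+1)\sigma_h^2/2$. Write $\mu=\mathbb{E}(Y)=h(\delta_0)$ under $H_0$, $\bar Y_k=n_k^{-1}\sum_i T_{ik}y_i$, and $g=h^{-1}$, so that $g'(\mu)=1/h'\{h^{-1}(\mu)\}$.

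\textbf{Step 1 (linearization).} By Assumption \ref{asu: moments} and a Taylor expansion,
\[
\hat\delta_k=g'(\mu)(\bar Y_k-\bar Y_0)+o_p(N^{-1/2}).
\]
Condition 1 gives $n_k/N=1/(K+1)+O_p(N^{-1})$. Decompose $y_i=\nu_i+\epsilon_i$ with $\epsilon_i=y_i-\mathbb{E}(Y_i\mid \bm Z_i)$; under $H_0$, $\nu_i$ does not depend on the arm. This gives
\[
\sqrt N(\bar Y_k-\bar Y_0)=(K+1)N^{-1/2}\sum_i (T_{ik}-T_{i0})\epsilon_i+(K+1)N^{-1/2}\sum_i(T_{ik}-T_{i0})(\nu_i-\mu)+o_p(1).
\]
In the second sum the constant $\mu$ drops out by Condition 1, so that sum is exactly the quantity in Condition 2.

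\textbf{Step 2 (joint CLT).} I treat the $\epsilon$-part by a martingale CLT conditional on $(\bm T_i,\bm Z_i)_{i\le N}$. Conditional independence of the outcomes and $\mathbb{E}(Y^2)<\infty$ give the Lindeberg condition. The conditional variance is $\sum_i (T_{ik}+T_{i0})\operatorname{Var}(Y\mid \bm Z_i)/N\to 2\mathbb{E}[\operatorname{Var}(Y\mid Z)]/(K+1)$, because the within-stratum proportions converge ($D_a(s)=o_p(N)$). The cross-covariance for $k\neq l$ uses only the shared control and gives $\mathbb{E}[\operatorname{Var}(Y\mid Z)]/(K+1)$. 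The conditional asymptotic normality of the $\epsilon$-part has a nonrandom limit, and the $\nu$-part is measurable with respect to the conditioning $\sigma$-field, so the two parts are asymptotically independent. Combining with the joint normality assumed in Condition 2, the vector $\sqrt N(\bar Y_k-\bar Y_0)_k$ has covariance
\[
(K+1)^2\Big[\tfrac{\mathbb{E}\operatorname{Var}(Y\mid Z)}{K+1}(\bm I+\bm 1\bm 1^T)+\Sigma_\nu\Big],
\]
where $\Sigma_\nu$ is the limiting covariance of the Condition 2 contrasts. Its diagonal entries are $\sigma_h^2$. For the off-diagonal entries I use the permutation invariance of the CAR labels: with $W_a=N^{-1/2}\sum_i T_{ia}(\nu_i-\mu)$ and $\sum_a W_a=N^{-1/2}\sum_i(\nu_i-\mu)$, exchangeability gives $\operatorname{Cov}(W_k-W_0,W_l-W_0)=\tfrac12\operatorname{Var}(W_k-W_0)=\sigma_h^2/2$ for $k\ne l$. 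Hence $\Sigma_\nu=\sigma_h^2\bm V$.

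\textbf{Step 3 (standard error).} The working-model Fisher information gives
\[
\text{se}(\hat\delta_k)^2=\phi\, g'(\hat\mu_k)^2\,[b''/\phi\text{-type term}]\ldots
\]
More simply, for the canonical-link exponential family $\widehat{\operatorname{Var}}(\hat\delta_k)=\phi g'(\bar Y_k)/n_k+\phi g'(\bar Y_0)/n_0$, since the variance function is $\phi h'(\theta)$. This is consistent for $2(K+1)\phi\, g'(\mu)/N$. Dividing gives $S_k$ with covariance entries
\[
\frac{g'(\mu)^2(K+1)^2[\cdots]}{2(K+1)\phi g'(\mu)}.
\]
The diagonal entry is $\{2\mathbb{E}\operatorname{Var}(Y\mid Z)+(K+1)\sigma_h^2\}\,g'(\mu)/(2\phi)$, which equals the stated constant times $1$. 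The off-diagonal entry is half of that, giving $\bm V$.

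The main obstacle is Step 2. It requires establishing asymptotic independence of the residual martingale and the covariate-imbalance term under CAR, where the assignments depend on the past covariates. It also requires justifying the $\sigma_h^2/2$ cross-covariance, which rests on the label-permutation invariance assumed before Theorem \ref{thm2:sigmah}.
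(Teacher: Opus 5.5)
Your proposal is correct and takes essentially the same route the paper indicates for Theorem \ref{thm2:sigmah}. You linearize the closed-form MLE $\hat\delta_k=h^{-1}(\bar Y_k)-h^{-1}(\bar Y_0)$ by the delta method and split $\sqrt N(\bar Y_k-\bar Y_0)$ into a residual part, which is conditionally normal given assignments and covariates with covariance $\mathbb{E}[\operatorname{Var}(Y\mid Z)](\bm I+\bm 1\bm 1^T)/(K+1)$ via $D_a(s)=o_p(N)$, plus the Condition 2 imbalance part. Label-permutation invariance then gives the $\sigma_h^2/2$ cross-covariance, and the Fisher-information standard error $\phi g'(\bar Y_k)/n_k+\phi g'(\bar Y_0)/n_0$ produces the stated scaling. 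These are exactly the extra ingredients the paper lists for this theorem.

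Only minor repairs are needed. First, $\mathbb{E}(Y)$ is $h$ of the working-model intercept, not of the true $\delta_0$. Second, drop the garbled first display in Step 3. Third, derive the $\sigma_h^2/2$ identity on the limit law rather than from finite-$N$ variances, which need not converge under Condition 2 alone. The $(k,l)$ contrast equals the $(k,0)$ contrast minus the $(l,0)$ contrast. By label invariance it has the same $\mathcal{N}(0,\sigma_h^2)$ limit, which forces the limiting covariance to be $\sigma_h^2/2$.
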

\begin{remark}
    The Simes test is widely recognized as one of the most common multiplicity adjustments for handling positively related or mutually independent test statistics. In this context, our Theorems 1-2 provide test statistics $S_1, \ldots, S_K$ that are positively correlated, when combined with the Simes test, may effectively control the overall family-wise Type I error rate. 
\end{remark}
\subsection{Test Size under Some Commonly Used Models} \label{subsec: special cases}
In Section \ref{subsec: Theorems}, we established the asymptotic joint distribution of Wald test statistics $S_1, \ldots, S_K$ under a general framework of GLM (\ref{mod:working}) and CAR. Now we examine several important special cases of GLMs that are commonly encountered in practice. For each case, we derive the specific form of their asymptotic properties by applying the general theory to the corresponding canonical link functions. Specifically, we consider the logistic regression for binary outcomes, linear regression model for continuous outcome, Poisson regression for count data, and exponential model.\par
\noindent
\textbf{Logistic Regression:} 
In clinical trials, the logistic regression model is widely used because binary responses are common. For example, it is commonly used as a surrogate endpoint in oncology trials, such as overall response rate (ORR). Under the logistic regression, the conditional distribution of $Y$ given $\theta$ is a Bernoulli distribution with $\phi = 1$. When CAR procedures whose within-stratum imbalances are bounded in probability are implemented, we have 
\begin{align*}
    \left(S_1, S_2, \ldots, S_K \right) \stackrel{D}{\longrightarrow} \mathcal{N}\left(\bm{0}, \frac{\mathbb{E}[\text{Var}(Y \mid Z)]}{\text{Var}(Y)} \bm{V} \right). 
\end{align*}

Because we have $\operatorname{Var}(Y)=\mathbb{E}[\operatorname{Var}(Y \mid Z)]+\operatorname{Var}(\mathbb{E}[Y \mid Z])$ and $\operatorname{Var}(\mathbb{E}[Y \mid Z]) > 0$, it follows that $\operatorname{Var}(Y) > \mathbb{E}[\operatorname{Var}(Y \mid Z)]$. Hence, the test is conservative. \par
\noindent
\textbf{Linear Regression:} 
Linear regression is widely used for continuous endpoints in many therapeutic areas. Here, the dispersion parameter $\phi = \sigma^2$, which is usually unknown and could be estimated via the observations. Under the linear regression, $h(\cdot)$ is an identity function with $h'[h^{-1}\{\mathbb{E}(Y)\}] = 1$. Hence, under CAR procedures with bounded within-stratum imbalances and known $\sigma^2$, we have
\begin{align*}
    \left(S_1, S_2, \ldots, S_K \right) \stackrel{D}{\longrightarrow} \mathcal{N}\left(\bm{0},  \bm{V}\right)
\end{align*}

However, the $\sigma^2$ is usually unknown and must be estimated based on the data. In practice, the model with omitted covariate tends to overestimate the variance of the treatment effect estimate \citep{shao2010theory, ma2015testing}. An inflated variance results in test statistics that are smaller in absolute value, which leads to conservative Type I error. \par
\noindent
\textbf{Poisson Regression:} 
As clinical trials frequently involve count outcomes, such as the number of adverse events or disease recurrences over a specified time period, Poisson regression serves as a fundamental tool for analyzing such data while accounting for their discrete, non-negative nature. The distribution of $Y|\theta$ is a Poisson distribution with $\phi = 1$. Hence, under the null hypotheses, we have $\phi h'[h^{-1}\{\mathbb{E}(Y)\}] = \mathbb{E}(Y) = \mathbb{E}(\mathbb{E}[Y|Z]) = \mathbb{E}[\text{Var}(Y|Z)]$. It follows that 
\begin{align*}
    \left(S_1, S_2, \ldots, S_K \right) \stackrel{D}{\longrightarrow} \mathcal{N}\left(\bm{0},  \bm{V}\right)
\end{align*}

A unique property of Poisson regression is that the variance equals the mean by definition. Therefore, even when the analysis model omits certain covariates, the marginal behavior of $Y$ still maintains the relationship: $\mathbb{E}[\operatorname{Var}(Y \mid Z)]=\phi h^{\prime}[h^{-1}\{\mathbb{E}(Y)\}]$. This equality ensures that the asymptotic variance of the treatment effect estimate is correctly specified, resulting in test statistics that follow a normal distribution with unit variance asymptotically. Consequently, the Type I error rate remains well-controlled, neither inflated nor conservative.\par 
\noindent
\textbf{Exponential Model:} The exponential model is also widely used in the post-randomization analysis in many clinical trials. For example, when clinical trials measure intervals between recurring events, such as hospital readmission or disease relapses, exponential regression offers a straightforward approach for modeling these gaps while accommodating varying follow-up times and patient-specific risk factors. In the exponential model, the response $Y$ given $\theta$ is exponentially distributed with $\phi=1$. Under the null hypothesis and canonical inverse link $h(\eta) = -1/\eta$, we have $\phi h'[h^{-1}\{\mathbb{E}(Y)\}]=(\mathbb{E}(Y))^2=[\mathbb{E}\{\mathbb{E}(Y|Z)\}]^2 $. For $Y\mid Z$ following an exponential distribution with rate parameter $\lambda(Z)$, the mean is $1/\lambda(Z)$ and variance is $1/[\lambda(Z)]^2$, thus $E[\text{Var}(Y|Z)] = E\{[E(Y|Z)]^2\}$. It follows that $\phi h'[h^{-1}\{\mathbb{E}(Y)\}] \leq E[\text{Var}(Y|Z)]$. 

\begin{align*}
    \left(S_1, S_2, \ldots, S_K \right) \stackrel{D}{\longrightarrow} \mathcal{N}\left(\bm{0},  \frac{\mathbb{E}[\text{Var}(Y \mid Z)]}{\{\mathbb{E}[\mathbb{E}(Y|Z)]\}^2} \bm{V}\right)
\end{align*}
where $\mathbb{E}[\text{Var}(Y|Z)] / \{\mathbb{E}[\mathbb{E}(Y|Z)]\}^2 \geq 1$. \\
Consequently, when using stratified randomization without fully accounting for stratification factors in the analysis, tests for treatment effects tend to be inflated, leading to inflated Type I error rates.\par

Practically, our results lead to the following recommendations. For CAR procedures with bounded within-stratum imbalances ($\sigma_h^2 = 0$), the standard Wald test is conservative for logistic regression, valid for Poisson regression with the canonical log link, and inflated for the exponential model. Under general CAR procedures satisfying Conditions 1--2, such as Pocock and Simon's minimization, the direction of
the size distortion is instead determined by the sign of $\mathbb{E}[\operatorname{Var}(Y \mid Z)] + (K+1)\sigma_h^2/2 - W$, where $W$ is the working variance term defined in Table \ref{tab:summary cases}. In particular, the Poisson Wald test is mildly inflated when $\sigma_h^2 > 0$,
which is consistent with the simulation results under minimization in Section \ref{sec: simu}. We therefore recommend the adjusted test whenever
exact asymptotic size is desired, and in the multi-arm setting we combine the resulting p-values through the Simes--Hochberg procedures of Section
\ref{subsec:control FWER}.

\subsection{Additional Results with non-canonical link}
\label{sec: additionalResults}

While canonical link functions in GLMs provide computational convenience, they may not always be the most appropriate choice for real data analysis. In many practical situations, non-canonical links can offer better model fit, sometimes more interpretable results. For example, in biomedical studies, the probit link may be more suitable than the canonical logit link when the underlying process follows a normal tolerance distribution. Similarly, the complementary log-log link might be more appropriate when analyzing time-to-event data in discrete time. Additionally, some non-canonical links, such as the identity link for count data, can provide more direct interpretation of the effects. Therefore, extending our theoretical results to non-canonical links could broaden the practical applicability of our testing procedure while maintaining its statistical robustness. Hence, in this section, Theorems \ref{thm1}-\ref{thm2:sigmah} which were derived under canonical link functions are extended to the GLMs with non-canonical link functions. 

In GLMs with general link functions, the relationship between the natural parameter $\theta$ and the linear predictor $\eta = \delta_0 + \delta_1T_1 + \ldots + \delta_K T_K$ is characterized by $\theta = \gamma(\eta)$, where $\gamma$ is some function. Unlike canonical links where $\gamma$ is the identity function (i.e., $\theta = \eta$), non-canonical links involve a more general transformation that connects the linear predictor to the natural parameter of the exponential family.

\begin{theorem}
\label{thm3_noncanonical}
    Suppose that a covariate-adaptive randomization procedure satisfies the condition that all within-stratum imbalances are bounded in probability; then under the global null hypothesis $H_{0}: \delta_{1} = \cdots = \delta_{K} = 0$, as $N \to \infty$, we have
    \begin{align*}
    (S_1, \ldots, S_K)  \xrightarrow{D} \mathcal{N}\left(\bm{0}, \frac{\mathbb{E}[\text{Var}(Y \mid Z)]}{\phi h'[h^{-1}\{\mathbb{E}(Y)\}] / \gamma'[h^{-1}\{\mathbb{E}(Y)\}]}\bm{V}\right)
\end{align*}
where $\bm{V}$ is a K-dimensional square matrix with $v_{ij} = 1$ if $i = j$ and $v_{ij} = 1/2$ if $i \neq j$. 
\end{theorem}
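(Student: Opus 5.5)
The plan mirrors the argument behind Theorem \ref{thm1}, changing only the variance estimate that enters $\text{se}(\hat\delta_k)$; the numerator $\hat\delta_k$ does not depend on the link. By \eqref{eqn:mle}, $\hat\delta_k = h^{-1}(\bar y_k) - h^{-1}(\bar y_0)$ with $\bar y_k = n_k^{-1}\sum_i y_i T_{ik}$. This is the same closed form as in the canonical case, because the score equations for the saturated arm-indicator model force fitted arm means to equal sample arm means under any link.

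\textbf{Step 1 (numerator).} Under $H_0$, let $\mu = \mathbb{E}(Y)$ and $\eta_0 = h^{-1}(\mu)$. A delta-method expansion using Assumption \ref{asu: moments} gives
\[
\sqrt N\,\hat\delta_k = \frac{\sqrt N\,\{(\bar y_k - \mu) - (\bar y_0-\mu)\}}{h'(\eta_0)} + o_p(1).
\]
Decompose $y_i - \mu = \{y_i - \nu_i\} + \{\nu_i - \mu\}$ with $\nu_i=\mathbb{E}(Y_i\mid \bm Z_i)$.
\begin{itemize}
\item The first part is a martingale-difference sum given assignments and covariates. Its conditional variance contributes $\mathbb{E}[\operatorname{Var}(Y\mid Z)]/\pi$ per arm, where $n_k/N\to\pi=1/(K+1)$.
\item The second part is $\sum_s (\nu(s)-\mu)\,D_k(s)$, using Assumption \ref{asu: discrete strata}. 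It is $O_p(1)$, hence negligible after scaling by $1/\sqrt N$, because within-stratum imbalances are bounded.
\end{itemize}
A martingale CLT (Cramér–Wold over $k$) then yields joint normality. The covariance of the arm-$k$ and arm-$l$ contrasts comes only from the shared control, which gives the factor $1/2$ and hence $\bm V$. The common variance is $2(K+1)\mathbb{E}[\operatorname{Var}(Y\mid Z)]/h'(\eta_0)^2$.

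\textbf{Step 2 (the key change).} I would compute the model-based standard error from the inverse Fisher information of the working GLM with link $\theta=\gamma(\eta)$. For one observation, the expected information for $\eta$ is
\[
\frac{\gamma'(\eta)^2\, b''(\theta)}{\phi}.
\]
Since $\mu = b'(\gamma(\eta)) = h(\eta)$, we have $h'(\eta)=b''(\theta)\gamma'(\eta)$, so the information equals $h'(\eta)\gamma'(\eta)/\phi$. At the MLE, $\hat\eta_k = h^{-1}(\bar y_k)\to\eta_0$ in probability for every arm. Because the arm indicators are orthogonal, the estimated variance of $\hat\delta_k$ is $\phi/\{h'(\hat\eta_k)\gamma'(\hat\eta_k) n_k\} + \phi/\{h'(\hat\eta_0)\gamma'(\hat\eta_0) n_0\}$. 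Hence
\[
N\,\text{se}(\hat\delta_k)^2 \xrightarrow{p} \frac{2(K+1)\,\phi}{h'(\eta_0)\gamma'(\eta_0)}.
\]
If $\phi$ is estimated, consistency of its estimator (assumed, as in the canonical case) suffices.

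\textbf{Step 3 (combine).} Slutsky's lemma gives $S_k$ jointly normal with covariance
\[
\frac{2(K+1)\mathbb{E}[\operatorname{Var}(Y\mid Z)]/h'(\eta_0)^2}{2(K+1)\phi/\{h'(\eta_0)\gamma'(\eta_0)\}}\,\bm V = \frac{\mathbb{E}[\operatorname{Var}(Y\mid Z)]\,\gamma'(\eta_0)}{\phi\, h'(\eta_0)}\,\bm V,
\]
which is the stated form. With $\gamma$ the identity this reduces to Theorem \ref{thm1}.

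The main obstacle is Step 1: establishing the joint martingale CLT under CAR, where assignments depend on past covariates, and rigorously showing the imbalance term $\sum_s(\nu(s)-\mu)D_k(s)/\sqrt N$ vanishes. The link-specific algebra in Step 2 is routine once the information identity $h'=b''\gamma'$ is noted. I would reuse the Theorem \ref{thm1} machinery here, since Step 1 is link-independent apart from the factor $h'(\eta_0)$.
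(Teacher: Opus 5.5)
Your proposal is correct and follows essentially the same route as the paper's proof. The common steps are a delta-method expansion of the closed-form contrast $h^{-1}(\bar y_k)-h^{-1}(\bar y_0)$, giving the limit $2(K+1)\mathbb{E}[\operatorname{Var}(Y\mid Z)]\bm V/h'(\eta_0)^2$; a Fisher-information standard error, where the identity $h'(\eta)=b''\{\gamma(\eta)\}\gamma'(\eta)$ turns the working variance into $\phi\, b''\{\gamma(\eta_0)\}=\phi h'(\eta_0)/\gamma'(\eta_0)$; and Slutsky's lemma. One bookkeeping point should be tightened: for a single arm, $\sum_i T_{ik}(\nu_i-\mu)$ also contains the common $O_p(\sqrt N)$ term $(K+1)^{-1}\sum_i(\nu_i-\mu)$. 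That term drops out only in the treatment--control contrast, because $n_k-n_0=O_p(1)$ gives $1/n_k-1/n_0=O_p(N^{-2})$. So your $O_p(1)$ claim should be made for $\sum_s(\nu(s)-\mu)\{D_k(s)-D_0(s)\}$, not for each arm separately.
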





\begin{table}[t!]
    \centering
    \caption{Summary of Type I error behavior of the standard Wald test across GLMs
    when covariates are omitted from the working model, under CAR procedures with
    bounded within-stratum imbalances (Theorems \ref{thm1} and \ref{thm3_noncanonical}).
    Here $\mu_Z = \mathbb{E}(Y \mid \bm{Z})$, and the working variance term is
    $W = \phi\, h'[h^{-1}\{\mathbb{E}(Y)\}] \big/ \gamma'[h^{-1}\{\mathbb{E}(Y)\}]$,
    which reduces to $\phi\, h'[h^{-1}\{\mathbb{E}(Y)\}]$ for canonical links.
    The test is conservative, valid, or inflated according to whether
    $\mathbb{E}[\operatorname{Var}(Y \mid Z)] - W$ is negative, zero, or positive.}
    \label{tab:summary cases}
    \resizebox{\textwidth}{!}{
    \begin{tabular}{l|l|c|c|l}
    \hline
        GLM Model & Link Function & Working Variance Term $W$
        & $\mathbb{E}[\operatorname{Var}(Y \mid Z)] - W$
        & Type I Error Behavior \\
        \hline
        Logistic Reg. & Logit (canonical)
        & $\mathbb{E}(Y)\{1-\mathbb{E}(Y)\}$
        & $-\operatorname{Var}(\mu_Z)$
        & Conservative \\
        Binary Reg. & Probit
        & $\mathbb{E}(Y)\{1-\mathbb{E}(Y)\}$\,$^{a}$
        & $-\operatorname{Var}(\mu_Z)$
        & Conservative \\
        Binary Reg. & Complementary Log-Log
        & $\mathbb{E}(Y)\{1-\mathbb{E}(Y)\}$\,$^{a}$
        & $-\operatorname{Var}(\mu_Z)$
        & Conservative \\
        Linear Reg. & Identity (canonical)
        & $\sigma^2$ (estimated)\,$^{b}$
        & $-\operatorname{Var}(\mu_Z)$
        & Conservative \\
        Poisson Reg. & Log (canonical)
        & $\mathbb{E}(Y)$
        & $0$
        & Valid \\
        Exponential & Inverse (canonical)
        & $\{\mathbb{E}(Y)\}^2$
        & $\operatorname{Var}(\mu_Z)$
        & Inflated \\
        Negative Binomial Reg. & Log
        & $\mathbb{E}(Y)+\alpha\{\mathbb{E}(Y)\}^2$\,$^{c}$
        & $\alpha\operatorname{Var}(\mu_Z)$
        & Inflated \\
        Gamma Reg. & Log
        & $\phi\{\mathbb{E}(Y)\}^2$
        & $\phi\operatorname{Var}(\mu_Z)$
        & Inflated \\
        Ordinal Endpoint\,$^{d}$ & Logit (per threshold $j$)
        & $\mathbb{E}(\pi_j)\{1-\mathbb{E}(\pi_j)\}$
        & $-\operatorname{Var}\{\mathbb{P}(Y \le j \mid \bm{Z})\}$
        & Conservative \\
        \hline
    \end{tabular}}
    \par
    \begin{minipage}{\textwidth}
    \vspace{2pt}
    \footnotesize
    $^{a}$\,For binary endpoints, $W$ equals the Bernoulli variance function evaluated
    at the marginal mean for any link, since
    $W = \phi\, b''\{\gamma(\eta_0)\}$ with $\eta_0 = h^{-1}\{\mathbb{E}(Y)\}$;
    hence the probit and complementary log-log links yield the same $W$ as the logit link.
    $^{b}$\,For linear regression, the dispersion estimated under the working model
    converges to $\operatorname{Var}(Y) = \mathbb{E}[\operatorname{Var}(Y \mid Z)]
    + \operatorname{Var}(\mu_Z)$, which overestimates
    $\mathbb{E}[\operatorname{Var}(Y \mid Z)]$.
    $^{c}$\,Negative binomial with $\phi = 1$ and fixed overdispersion parameter
    $\alpha > 0$, so that $\operatorname{Var}(Y \mid \bm{Z}) = \mu_Z + \alpha\mu_Z^2$.
    $^{d}$\,Refers to a separately analyzed binary indicator
    $\mathbb{I}(Y \le j)$ at a fixed threshold $j$, with cumulative probability
    $\pi_j = \mathbb{P}(Y \le j)$; the jointly fitted proportional-odds model is not
    covered by Theorems \ref{thm1}--\ref{thm4:sigmah_noncanonical}.
    All entries assume $\operatorname{Var}(\mu_Z) > 0$, i.e., the omitted covariates
    are prognostic. Under general CAR procedures satisfying Conditions 1--2, the
    comparison is between $\mathbb{E}[\operatorname{Var}(Y \mid Z)] + (K+1)\sigma_h^2/2$
    and $W$ (Theorems \ref{thm2:sigmah} and \ref{thm4:sigmah_noncanonical}).
    \end{minipage}
\end{table}


\begin{theorem}
\label{thm4:sigmah_noncanonical}
Suppose that a covariate-adaptive randomization procedure satisfies
Conditions 1 and 2. Then, under the global null hypothesis
$H_0:\delta_1=\cdots=\delta_K=0$, as $N\to\infty$,
\[
(S_1,\ldots,S_K)
\xrightarrow{D}
\mathcal{N}\left(
\bm{0},
\frac{
\mathbb{E}\!\left[\operatorname{Var}(Y\mid Z)\right]
+(K+1)\sigma_h^2/2
}{
\displaystyle
\phi
h'\!\left[
h^{-1}\{\mathbb{E}(Y)\}
\right]
/
\gamma'\!\left[
h^{-1}\{\mathbb{E}(Y)\}
\right]
}
\bm{V}
\right),
\]
where $\bm{V}=(v_{ij})_{K\times K}$, with
$v_{ij}=1$ if $i=j$ and $v_{ij}=1/2$ if $i\neq j$.
\end{theorem}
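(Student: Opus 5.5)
The argument runs like the proofs of Theorems \ref{thm2:sigmah} and \ref{thm3_noncanonical}. The only link-specific ingredient is the Fisher information used in $\text{se}(\hat\delta_k)$. The plan has three steps: linearize $\hat\delta_k$ around $\eta_0=h^{-1}\{\mathbb{E}(Y)\}$, derive the joint limit of the arm means under Conditions 1--2, and show the standard error converges to the working variance $W=\phi h'(\eta_0)/\gamma'(\eta_0)$.

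\textbf{Step 1 (linearization).} Write $\bar Y_k=n_k^{-1}\sum_i y_iT_{ik}$. Under $H_0$, $\mathbb{E}(Y_i\mid \bm T_i,\bm Z_i)=\nu_i$ does not depend on the arm. By Assumption \ref{asu: moments} and the delta method,
\[
\sqrt N\,\hat\delta_k=\frac{\sqrt N(\bar Y_k-\bar Y_0)}{h'(\eta_0)}+o_p(1).
\]
Here we use that $n_k/N\to 1/(K+1)$, which follows from Condition 1. I will decompose $\bar Y_k-\bar Y_0=A_k+B_k$, with
\[
A_k=\frac{\sum_i T_{ik}(y_i-\nu_i)}{n_k}-\frac{\sum_i T_{i0}(y_i-\nu_i)}{n_0},\qquad
B_k=\frac{\sum_i T_{ik}\nu_i}{n_k}-\frac{\sum_i T_{i0}\nu_i}{n_0}.
\]

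\textbf{Step 2 (joint limit).} For $A_k$, conditional on $\{\bm T_i,\bm Z_i\}$ the summands are independent with variances $\text{Var}(Y\mid Z_i)$. A martingale (or conditional Lindeberg) CLT then gives $\sqrt N\,(A_1,\ldots,A_K)\to\mathcal N\big(0,\,(K+1)\mathbb{E}[\text{Var}(Y\mid Z)]\,\bm\Sigma\big)$, with $\Sigma_{kk}=2$ and $\Sigma_{kl}=1$ coming from the shared control. The empirical averages of $\text{Var}(Y\mid Z_i)$ within each arm converge to $\mathbb{E}[\text{Var}(Y\mid Z)]$ by the stratum-level balance $D_a(s)=o_p(N)$.

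For $B_k$, replace $n_k,n_0$ by $N/(K+1)$, which Condition 1 permits. This gives $\sqrt N B_k=(K+1)N^{-1/2}\sum_i(T_{ik}-T_{i0})\nu_i+o_p(1)$, and the sum is asymptotically $\mathcal N(0,(K+1)^2\sigma_h^2)$ by Condition 2. For the cross-covariances between contrasts I will use the permutation invariance of the CAR procedure. Each contrast is a difference of arm-wise sums $U_a=N^{-1/2}\sum_iT_{ia}\nu_i$, and exchangeability forces $\text{Cov}(U_a,U_b)$ to take one value on the diagonal and one value off it. The constraint $\sum_aU_a=N^{-1/2}\sum_i\nu_i$ fixes the remaining freedom. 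The resulting covariance is again proportional to $\bm\Sigma$, namely $(K+1)^2(\sigma_h^2/2)\bm\Sigma$.

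The two terms $A$ and $B$ are asymptotically uncorrelated, because $A$ has conditional mean zero given the design and covariates. Their covariances therefore add:
\[
\sqrt N(\hat\delta_1,\ldots,\hat\delta_K)\to\mathcal N\Big(0,\ \tfrac{(K+1)\{\mathbb{E}[\text{Var}(Y\mid Z)]+(K+1)\sigma_h^2/2\}}{h'(\eta_0)^2}\bm\Sigma\Big).
\]

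\textbf{Step 3 (standard error).} For a non-canonical link the observed or expected information for the arm-$k$ linear predictor is $n_k h'(\eta)^2/\{\phi\, b''(\gamma(\eta))\}$. Since $h=b'\circ\gamma$, we have $b''(\gamma(\eta))=h'(\eta)/\gamma'(\eta)$, so the information equals $n_k h'(\eta)\gamma'(\eta)/\phi$. Evaluated at $\hat\eta_k\to_p\eta_0$, this gives
\[
N\,\text{se}^2(\hat\delta_k)\to_p \frac{2(K+1)\phi}{h'(\eta_0)\gamma'(\eta_0)}.
\]
Dividing, the asymptotic covariance of $(S_1,\ldots,S_K)$ is $\{\mathbb{E}[\text{Var}(Y\mid Z)]+(K+1)\sigma_h^2/2\}\gamma'(\eta_0)/\{\phi h'(\eta_0)\}$ times $\bm\Sigma/2=\bm V$, which is the claim.

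I expect the main obstacle to be the cross-covariance structure of the $B_k$ terms in Step 2. Condition 2 only specifies the marginal variance $\sigma_h^2$ of each contrast, so getting the off-diagonal value $1/2$ relies on the permutation-invariance argument and on joint normality. The other delicate point is showing that $A$ and $B$ are asymptotically independent under the sequential dependence of CAR. For that I would condition on the whole design sequence and apply a conditional CLT for $A$ whose limit does not depend on the conditioning.
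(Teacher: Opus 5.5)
Your proposal is correct and follows essentially the paper's route. You linearize $\hat\delta_k=h^{-1}(\bar Y_k)-h^{-1}(\bar Y_0)$ by the delta method, split it into a conditionally mean-zero residual part and an imbalance part governed by Condition 2 (using the same extra ingredients the paper invokes, label-permutation invariance and $D_a(s)=o_p(N)$), and use the non-canonical information $n_k h'(\eta)\gamma'(\eta)/\phi$ in $\text{se}(\hat\delta_k)$. One small simplification: the off-diagonal value $\sigma_h^2/2$ already follows from the limiting contrast vector $G$, since swapping labels $0$ and $1$ gives $G_2\stackrel{d}{=}G_2-G_1$ and hence $\operatorname{Cov}(G_1,G_2)=\sigma_h^2/2$; this makes your constraint on $\sum_a U_a$ unnecessary and avoids relying on convergence of finite-sample covariances.
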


\begin{remark}
     Let $\eta=\delta_0+\delta_1 T_1+\ldots+\delta_K T_K$, where at most one of $T_1, \ldots, T_K$ equals 1 and the others are 0 (all equal 0 for the
     control arm). For binary endpoints, besides the canonical logit link $\log \{\mu /(1-\mu)\}=\delta_0+\delta_k T_k$, the probit link
     $\Phi^{-1}(\mu)=\delta_0+\delta_k T_k$ and the complementary log-log link $\log \{-\log (1-\mu)\}=\delta_0+\delta_k T_k$ are common
     alternatives. For Poisson regression modeling count data, although the canonical link is $\log (\mu)=\delta_0+\delta_k T_k$, the identity link
     $\mu=\delta_0+\delta_k T_k$ and the square-root link $\sqrt{\mu}=\delta_0+\delta_k T_k$ are sometimes preferred for direct interpretation. In gamma regression for positive continuous responses, while the inverse link $1 / \mu=\delta_0+\delta_k T_k$ is canonical, the log link $\log (\mu)=\delta_0+\delta_k T_k$ is widely used as it ensures positive fitted values. Here $\mu$ represents the conditional mean of the response given the treatment assignment, and these non-canonical options expand the flexibility of GLMs while maintaining the theoretical
     properties we established under mild regularity conditions.
\end{remark}

\section{Constructing Valid Test Statistics} 
\label{sec: new test}
Theorem \ref{thm1}, along with the discussion in Section \ref{subsec: Theorems} for common GLMs, shows that under CAR, the standard Wald test statistics fail to converge to a standard multivariate normal distribution in the absence of a fully adjusted model. This deviation arises from two primary sources: (1) The implementation of CAR and the potential working model misspecification can lead to test statistics that fail to converge to the correct asymptotic distribution under the null hypothesis $H_0$.
(2) The presence of multiple treatment arms and their associated subject assignment mechanisms induces dependence among test statistics, which subsequently affects multiple testing corrections. While existing methods can adjust for dependent test statistics or $p$-values, they typically yield conservative family-wise Type I error. To address these limitations, we propose the adjusted test statistics that converge to $N(\bm{0}, \bm{I})$ under the working model and the global null hypothesis $H_0: \delta_1 = \cdots = \delta_K = 0$, regardless of potential covariate omission. Such a test would simultaneously achieve robustness to model misspecification and valid family-wise Type I error control.

The scenarios where the within-stratum imbalances are bounded in probability, such as STR-PB, STR-BCD, and HuHuCAR are considered first. It should be noted that in HuHuCAR \citep{HuHuASYMPTOTIC}, the within-stratum imbalances remain bounded when non-zero weights are assigned to stratum-level balance. Based on the proof of Theorem \ref{thm1} in the Supplementary material, under the working model and the global null hypothesis $H_0: \delta_1 = \cdots = \delta_K = 0$, we can readily establish that:
\begin{align}
    \sqrt{N}\left(\frac{h'[h^{-1}\{\mathbb{E}(Y)\}]}{\sqrt{2(K+1)\mathbb{E}[\text{Var}(Y \mid Z)] }}\right)(\bm{V})^{-1/2} \hat{\bm{\delta}} \stackrel{D}{\longrightarrow} \mathcal{N}(\bm{0}, \bm{I}). 
\end{align}
Hence the newly constructed statistics are 
\begin{align}
\label{equ:adjS}
\bm{S}_{adj} &= \sqrt{N} \frac{h'\{h^{-1}(\bar{Y})\}}{\sqrt{2(K+1)\sigma^2_{\text{pooled}}}} (\bm{V})^{-1/2} \hat{\bm{\delta}}
\end{align}

where the pooled variance estimator is
\begin{align}
\sigma^2_{\text{pooled}} = \sum_{s=1}^S \frac{n_s}{N} \cdot
\frac{\sum_{i=1}^N (y_i - \widehat{\nu}_s)^2 \mathbf{1}\{\bm{z}_i = s\}}{n_s - 1},
\qquad
\widehat{\nu}_s = \frac{\sum_{i=1}^N y_i \mathbf{1}\{\bm{z}_i = s\}}{n_s}.
\end{align}
Here $\bar{Y}$ is the overall sample mean, $S$ is the total number of strata,
$n_s$ is the number of subjects in stratum $s$, and $\widehat{\nu}_s$ is the
sample mean of the responses in stratum $s$. The indicator
$\mathbf{1}\{\bm{z}_i = s\}$ takes value 1 if observation $i$ is in stratum
$s$, and 0 otherwise. The pooled variance estimator $\sigma_{\text{pooled}}^2$
weights each stratum-specific variance by the relative stratum size $n_s/N$.

\begin{remark}
    Although the stratification covariates (or stratum labels) are used in our variance estimator, we do not include them in the working model. Here strata are used only for variance calibration to reflect the covariance structure induced by covariate-adaptive randomization. This is different from modeling $E(Y\mid T,Z)$, which would require fitting a high-dimensional GLM with many stratum indicators and can be numerically unstable when strata are numerous or sparse.
\end{remark}

Similarly, for general CAR procedures under Conditions 1-2, the within-stratum-level imbalance may not be bounded. Under the working model and the null hypothesis $H_{0k}: \delta_k = 0 (k = 1, \ldots, K)$, we have:
\begin{align}
    \sqrt{N}\left(\frac{h'[h^{-1}\{\mathbb{E}(Y)\}]}{\sqrt{2(K+1)\left(\mathbb{E}[\text{Var}(Y \mid Z)] + (K+1) \sigma_h^2 / 2 \right) }}\right)(\bm{V})^{-1/2} \hat{\bm{\delta}} \stackrel{D}{\longrightarrow} \mathcal{N}(\bm{0}, \bm{I}). 
\end{align}
It follows that the adjusted test statistics are 
\begin{align}
\label{equ:adjS2}
\bm{S}_{adj} &= \sqrt{N} \frac{h'\{h^{-1}(\bar{Y})\}}{\sqrt{2(K+1)\sigma^2_{\text{pooled,adj}}}} (\bm{V})^{-1/2} \hat{\bm{\delta}}
\end{align}
where the adjusted pooled variance estimator is
\begin{align}
\sigma^2_{\text{pooled,adj}} = \sum_{s=1}^S \frac{n_s}{N} \frac{\sum_{i=1}^N (y_i - \widehat\nu_s)^2 \mathbf{1}\{\bm z_i = s\}}{n_s - 1} + \frac{(K+1) \hat{\sigma}_h^2}{2}
\end{align}
Compared with Equation \eqref{equ:adjS}, the only additional
quantity that needs to be estimated in Equation \eqref{equ:adjS2}
is $\sigma_h^2$. This additional variance arises because CAR
procedures such as Pocock and Simon's minimization do not generally
bound the within-stratum imbalances.

Under Condition 2, for each treatment-control contrast
$k=1,\ldots,K$, define
\begin{align}
H_{N,k}
&=
\frac{1}{\sqrt N}
\sum_{i=1}^N
(T_{ik}-T_{i0})\nu_i                                      \\
&=
\frac{1}{\sqrt N}
\sum_{s=1}^S
\{D_k(s)-D_0(s)\}\nu(s),
\end{align}
where
\[
\nu_i
=
\mathbb E(Y_i\mid\bm Z_i)
=
\nu(s)
\]
for a subject in stratum $s$. The parameter $\sigma_h^2$ is the
common variance parameter in the limiting distribution of
$H_{N,k}$. Therefore, it must be estimated by the variance of
$H_{N,k}$ across randomization replicates, rather than by the mean
of a signed imbalance statistic.

In practice, we estimate $\sigma_h^2$ using a fixed-covariate Monte
Carlo procedure as follows.

\textbf{Step 1:}
Keeping the observed covariates fixed, independently repeat the
specified CAR procedure $B$ times, producing treatment assignment
indicators
\[
T_{ik}^{(b)},
\qquad
b=1,\ldots,B.
\]

\textbf{Step 2:}
Estimate the conditional mean in stratum $s$ by
\[
\widehat\nu_s
=
\frac{
\sum_{i=1}^N
y_i\mathbf 1\{\bm z_i=s\}
}{
n_s
}.
\]
For each replicate $b$ and contrast $k=1,\ldots,K$, calculate
\begin{align}
\widehat H_{N,k}^{(b)}
&=
\frac{1}{\sqrt N}
\sum_{i=1}^N
\left(
T_{ik}^{(b)}-T_{i0}^{(b)}
\right)
\widehat\nu_i                                               \\
&=
\frac{1}{\sqrt N}
\sum_{s=1}^S
\left\{
D_k^{(b)}(s)-D_0^{(b)}(s)
\right\}
\widehat\nu_s,
\end{align}
where $\widehat\nu_i=\widehat\nu_s$ whenever $\bm z_i=s$.

\textbf{Step 3:}
For each treatment-control contrast, let
\[
\overline H_{N,k}
=
\frac{1}{B}
\sum_{b=1}^B
\widehat H_{N,k}^{(b)}
\]
and calculate the Monte Carlo variance
\[
\widehat\sigma_{h,k}^2
=
\frac{1}{B-1}
\sum_{b=1}^B
\left\{
\widehat H_{N,k}^{(b)}
-
\overline H_{N,k}
\right\}^2.
\]

\textbf{Step 4:}
Under treatment-label permutation invariance, the limiting
variances are identical across treatment-control contrasts.
Therefore, the pooled estimator is
\begin{equation}
\widehat\sigma_h^2
=
\frac{1}{K}
\sum_{k=1}^K
\widehat\sigma_{h,k}^2.
\end{equation}

\begin{remark}
\label{rmk:sigmah-consistency}
The estimator $\widehat\sigma_h^2$ is consistent under the following
conditions: $B = B_N \to \infty$; each stratum has positive probability,
$\mathbb{P}(\bm{Z} = s) > 0$, so that $\widehat\nu_s \xrightarrow{P} \nu(s)$
under $H_0$; and the conditional second moment of $H_{N,k}$ given the
covariates converges in probability to $\sigma_h^2$, which holds for Pocock
and Simon's minimization by the joint asymptotic normality of the
within-stratum imbalances \citep{hu2020theory, hu2023multi}. Under these
conditions, $\widehat\sigma_h^2 \xrightarrow{P} \sigma_h^2$, and the limiting
distribution of the adjusted statistics in \eqref{equ:adjS2} follows from
Slutsky's theorem.
\end{remark}

\begin{remark}
    The proposed adjusted test ensures valid Type I error control by accounting for the variance-covariance structure inherent in the dependency among treatment groups. While this adjustment ensures convergence to a standard multivariate normal distribution and addresses conservative Type I error in the standard Wald test, it does not always improve power. This is due to potential variance inflation, which can reduce sensitivity to detect the effects, especially for weak treatment effects or low correlation between test statistics. However, the adjusted test is advantageous in small samples, imbalanced group sizes, or highly correlated test statistics, where it provides robust error control and reliability in complex study designs. Thus, its application should balance Type I error control against potential power trade-offs, depending on the trial context.
\end{remark}

\section{Numerical Studies} 
\label{sec: simu}
In this section, we present simulation studies that evaluate our proposed adjusted test statistics in (\ref{equ:adjS}) and (\ref{equ:adjS2}). Our primary goal is to demonstrate that these tests maintain valid family-wise Type I error rates while potentially improving power under various CAR designs. We evaluate both the family-wise and individual Type I error rates for the original and adjusted tests when testing each null hypothesis $H_{0k}$ ($k = 1, \ldots, K$). The simulations cover three commonly used GLMs: logistic regression, Poisson regression, and exponential model with canonical link functions. For all simulation scenarios, we consider 3-arm trials with sample sizes of $N = 300$ and $N = 600$ for Type I error assessment, and $N = 300$ for power analysis. Each scenario is replicated 5000 times. \par

\begin{table}[t!]
    \caption{Simulated Type I error for Logistic Regression}
    \label{tab:logistypeIerror}
    \resizebox{\linewidth}{!}{
    \begin{tabular}{cc|ccc|ccc}
    \hline
     &   &  \multicolumn{3}{c}{\textbf{Wald Test}} &  \multicolumn{3}{c}{\textbf{Adjusted Test}}\\ 
              Sample Size  & Randomization & Family-wise & $H_1$ & $H_2$ &  \textcolor{black}{Family-wise} & $H_1$ & $H_2$ \\ 
                \hline 
                &  STR-PB  & 0.0190 & 0.0108 & 0.0114 & 0.0480 & 0.0270 & 0.0226 \\
        N = 300 &  PS   &  0.0140 & 0.0090 & 0.0101 &  0.0433 & 0.0259 & 0.0247 \\ 
                &  HuHuCAR & 0.0230 & 0.0116 & 0.0160 & 0.0534 & 0.0268 & 0.0298 \\ 
                 \hline
                &  STR-PB  &  0.0184 & 0.0102 & 0.0118 & 0.0480 & 0.0262 & 0.0234  \\
        N = 600 &  PS   &  0.0192 &  0.0109 & 0.0112 &  0.0476 &  0.0254 & 0.0272 \\ 
                &  HuHuCAR & 0.0258 & 0.0154 & 0.0156 & 0.0540  & 0.0298  & 0.0330 \\
                \hline
    \end{tabular}
    }
\end{table}

\begin{table}[t!]
    \caption{Simulated Type I error for Poisson Regression}
    \label{tab:poistypeIerror}
     \resizebox{\linewidth}{!}{
    \begin{tabular}{cc|ccc|ccc}
     \hline
    Sample Size & Randomization  &  \multicolumn{3}{c}{\textbf{Wald Test}} &  \multicolumn{3}{c}{\textbf{Adjusted Test}}\\ 
                &  & Family-wise & $H_1$ & $H_2$ &  \textcolor{black}{Family-wise} & $H_1$ & $H_2$ \\ 
                \hline 
                &  STR-PB  & 0.0525 & 0.0300 & 0.0310 & 0.0555 & 0.0275 & 0.0315 \\
        N = 300 &  PS   & 0.0570 & 0.0313 & 0.0304 & 0.0521 & 0.0282  & 0.0297 \\ 
                &  HuHuCAR &  0.0442 &  0.0294 &  0.0236 & 0.0530 & 0.0296 & 0.0264 \\ 
                   \hline
                &  STR-PB  &  0.0510 &  0.0295 & 0.0310 & 0.0540 & 0.0265 &  0.0320\\
        N = 600 &  PS   & 0.0574 & 0.0304  & 0.0322 & 0.0529 & 0.0272 & 0.0299 \\ 
                &  HuHuCAR & 0.0440 & 0.0238 & 0.0270 & 0.0474  & 0.0226  & 0.0262 \\
                 \hline
    \end{tabular}
    }
\end{table}

\begin{table}[t!]
    \caption{Simulated Type I error for Exponential Model}
    \label{tab:expotypeIerror}
    \resizebox{\linewidth}{!}{
    \begin{tabular}{cc|ccc|ccc}
     \hline
    Sample Size & Randomization  &  \multicolumn{3}{c}{\textbf{Wald Test}} &  \multicolumn{3}{c}{\textbf{Adjusted Test}}\\ 
                &  & Family-wise & $H_1$ & $H_2$ &  \textcolor{black}{Family-wise} & $H_1$ & $H_2$ \\ 
                \hline 
                &  STR-PB  & 0.1554 & 0.0956 & 0.0954 & 0.0538 & 0.0284 & 0.0298 \\
        N = 300 &  PS   & 0.1655 & 0.1087 & 0.1149 & 0.0522 & 0.0294  & 0.0303 \\ 
                &  HuHuCAR & 0.1794 & 0.1096 & 0.1170 & 0.0568  & 0.0338 & 0.0344 \\ 
                \hline
                &  STR-PB  &  0.1666 & 0.1068 &  0.0998  & 0.0550 & 0.0314 & 0.0268 \\
        N = 600 &  PS   & 0.1825 &  0.1077 & 0.1092 & 0.0544 & 0.0301 & 0.0289  \\ 
                &  HuHuCAR & 0.1730 & 0.1044 & 0.1078 & 0.0572  & 0.0296  & 0.0306 \\ 
                 \hline
    \end{tabular}
    }
\end{table}

In Tables \ref{tab:logistypeIerror}-\ref{tab:expotypeIerror}, we present the family-wise and individual Type I errors for commonly used GLMs under both the standard Wald test statistics and our proposed adjusted test statistics. The results demonstrate that the adjusted test statistics provide more efficient control of Type I errors.

For the treatment effect tests based on logistic regression, we set the parameters $\boldsymbol{\delta}=(-1,0,0)$ and $\beta=(1,2)$ with fixed sample sizes of $N=300$ and $N=600$. The covariates used in simulations include two binary variables $Z_1$ and $Z_2$, each evenly distributed with probability 0.5. Under this logistic regression setup, while the standard Wald test statistics yield conservative but acceptable family-wise Type I error rates, our adjusted test statistics demonstrate improved performance. Specifically, they maintain valid family-wise Type I error control and reduce conservativeness in individual Type I error rates for each test. Therefore, the adjusted tests show superior performance in controlling Type I errors.

For the treatment effect tests based on Poisson regression, we set $\bm{\delta} =(0.8,0,0)$ and $\beta=(0.25, 0.25)$ with sample size $N=300$ and $N = 600$.  The covariates $Z_1$, $Z_2$ are assumed to follow the same distribution as in the logistic regression scenario.  Under this setting, Table \ref{tab:poistypeIerror} shows that both the standard Wald test and our adjusted test maintain good control of Type I error rates. The family-wise error rate stays close to the nominal level of $0.05$ , and the individual Type I errors are also well controlled. This indicates that the asymptotic approximation works well for Poisson regression even without adjustment, likely due to the natural mean-variance relationship in the Poisson distribution. Nevertheless, our adjusted test statistics provide comparable performance while maintaining theoretical consistency with our general framework.

For the exponential model scenario, we specify parameters $\boldsymbol{\delta}=(0.2,0,0)$ and $\boldsymbol{\beta}=(0.5,1)$. The covariates $Z_1$ and $Z_2$ are also generated following the same distribution described earlier in the logistic regression setting. Empirical Type I errors from this setup are summarized in Table $\ref{tab:expotypeIerror}$. It can be seen that the standard Wald test leads to substantial inflation in both family-wise and individual Type I error rates, which are not acceptable in practical trial designs. Conversely, the adjusted test controls the Type I errors well. 

Having established the Type I error control properties of both standard Wald test and adjusted test, we now evaluate their power performance through simulation studies. We consider a 3-arm trial setting with one control group and two different dosages of a new drug. To create a realistic scenario for power analysis, we designate one treatment arm as having an optimal effect and the other as having a suboptimal effect compared to the control. This design allows us to assess how well our testing procedures can detect both strong and moderate treatment effects, which is particularly relevant in dose-finding studies where identifying the most effective dose level is crucial.

For power analysis, we consider trials with 300 subjects ($N=300$) using the HuHuCAR, which balances covariates among arms at stratum, marginal, and overall levels. While we present results for this specific CAR method, similar power performance was observed with other CAR procedures, including STR-PB, stratified biased coin design, and Pocock and Simon minimization. In the following tables, $\delta_1$ and $\delta_2$ denote the treatment effects of the two treatment arms relative to control.

\begin{table}[t!]
\caption{Simulated power for logistic regression with $\mathrm{N}=300$ (varying $\delta_1$, fixed $\delta_2$ )}
\label{tab:logis_power_1}
\resizebox{\linewidth}{!}{
\begin{tabular}{cccccc}
  \hline
$\delta_1$ & $\delta_2$ &  Test (Trt1) &   Test (Trt2) & Adjusted Test (Trt1) & Adjusted Test (Trt2) \\ 
  \hline
0.2 & 0.15 & 0.0296 & 0.0238 & 0.0484 & 0.0384 \\ 
0.4 & 0.15 &   0.0906 & 0.0288 & 0.1652 & 0.0332 \\ 
0.6 & 0.15 &   0.2270 & 0.0354 & 0.3640 & 0.0386 \\ 
0.8 & 0.15 &   0.4276 & 0.0392 & 0.6204 & 0.0476 \\ 
1.0 & 0.15 &   0.6404 & 0.0404 & 0.8148 & 0.0646 \\ 
1.2 & 0.15 &   0.8202 & 0.0422 & 0.9414 & 0.0918 \\ 
   \hline
\end{tabular}
}
\end{table}

\begin{table}[t!]
\caption{Simulated power for logistic regression with $\mathrm{N}=300$ (parallel increasing effects)}
\label{tab:logis_power_2}
\resizebox{\linewidth}{!}{
\begin{tabular}{cccccc}
  \hline
$\delta_1$ & $\delta_2$ &  Test (Trt1) &   Test (Trt2) & Adjusted Test (Trt1) & Adjusted Test (Trt2) \\ 
  \hline
0.2 & 0.15 & 0.0296 & 0.0238 & 0.0484 & 0.0384 \\ 
  0.4 & 0.35 & 0.0998 & 0.0848 & 0.1230 & 0.0888 \\ 
  0.6 & 0.55 & 0.2596 & 0.2252 & 0.2552 & 0.1926 \\ 
  0.8 & 0.75 & 0.4932 & 0.4502 & 0.4270 & 0.3574 \\ 
  1.0 & 0.95 & 0.7092 & 0.6736 & 0.6082 & 0.5328 \\ 
  1.2 & 1.15 & 0.8790 & 0.8566 & 0.7646 & 0.7118 \\ 
   \hline
\end{tabular}
}
\end{table}

Tables \ref{tab:logis_power_1} and \ref{tab:logis_power_2} present power analyses under the logistic regression model in two distinct scenarios. In Table \ref{tab:logis_power_1}, we fix $\delta_2$ at 0.15 while varying $\delta_1$ from 0.2 to 1.2. As $\delta_1$ increases, both tests show monotonically increasing power for Treatment 1, with our adjusted test consistently achieving higher power (e.g., 0.8148 versus 0.6404 when $\delta_1=1.0$). Meanwhile, the power for Treatment 2 remains stable, reflecting its fixed effect size.
Table \ref{tab:logis_power_2} examines a different scenario where both treatment effects increase in parallel, with $\delta_2$ maintained slightly below $\delta_1$. In this case, we observe a distinct pattern: while our adjusted test exhibits superior power for smaller treatment effects, the standard Wald test demonstrates better power when both treatment effects become large.

\begin{table}[t!]
\caption{Simulated power for Poisson regression with N = 300}
\label{tab:poisson_power}
\resizebox{\linewidth}{!}{
\begin{tabular}{cccccc}
  \hline
$\delta_1$ & $\delta_2$ &  Test (Trt1) &   Test (Trt2) & Adjusted Test (Trt1) & Adjusted Test (Trt2) \\ 
  \hline
  0.1 & 0.03& 0.0658 & 0.0338 & 0.0654 & 0.0306 \\ 
  0.2 & 0.06 & 0.1876 & 0.0518 & 0.1838 & 0.0362 \\ 
  0.3 & 0.09 & 0.3862 & 0.0802 & 0.3922 & 0.0416 \\ 
  0.4 & 0.12 & 0.6138 & 0.1176 & 0.6246 & 0.0520 \\ 
  0.5 & 0.15 & 0.8072 & 0.1644 & 0.8172 & 0.0608 \\ 
  0.6 & 0.18 & 0.9222 & 0.2164 & 0.9296 & 0.0664 \\ 
   \hline
\end{tabular}
}
\end{table}

In Table \ref{tab:poisson_power}, we evaluate the power performance for Poisson regression where $\delta_2$ is set to be proportional to $\delta_1\left(\delta_2=0.3 \delta_1\right)$. For Treatment 1 , both the standard Wald test and adjusted test provide comparable power as $\delta_1$ increases from 0.1 to 0.6, with the adjusted test showing slightly higher power (e.g., 0.6246 vs 0.6138 when $\delta_1=0.4$ ). For Treatment 2, where $\delta_2$ takes smaller values, the standard Wald test demonstrates marginally higher power than the adjusted test, though both maintain relatively low power due to the smaller values of $\delta_2$. 

\begin{table}[t!]
\centering
\caption{Simulated power for exponential model with $N = 300$}
\label{tab:expo_power}
    \begin{tabular}{cccc}
  \hline
 $\delta_1$ & $\delta_2$ & Adjusted Test (Trt1) & Adjusted Test (Trt2) \\ 
  \hline
0.05 & 0.03 & 0.0878 & 0.0472 \\ 
  0.10 & 0.06 & 0.2236 & 0.0692 \\ 
  0.15 & 0.09 & 0.3890 & 0.1000 \\ 
  0.20 & 0.12 & 0.5576 & 0.1416 \\ 
  0.25 & 0.15 & 0.7022 & 0.1854 \\
  0.30 & 0.18 & 0.8068 & 0.2270 \\
  0.35 & 0.21 & 0.8802 & 0.2752 \\ 
  0.40 & 0.24 & 0.9258 & 0.3210 \\ 
   \hline
\end{tabular}
\end{table}
For the exponential model, since we have demonstrated that the standard Wald test with Simes correction fails to control the family-wise Type I error rate, we focus only on examining the power of our adjusted test. Table \ref{tab:expo_power} presents the power analysis under our adjusted test where $\delta_2$ is proportional to $\delta_1\left(\delta_2=0.6 \delta_1\right)$. As $\delta_1$ increases from 0.05 to 0.40 , the power for Treatment $1$ increases substantially from 0.0878 to 0.9258 . The power for Treatment $2$ , though consistently lower due to smaller values of $\delta_2$, also shows a steady increase from 0.0472 to 0.3210 . These results show that our adjusted test maintains good power for detecting treatment differences while preserving valid Type I error control.

\section{Analysis of Metastatic Breast Cancer Trial} 
\label{sec: realtrial}
\cite{hamilton2022nextmonarch} reported a clinical trial (nextMONARCH) evaluating the efficacy of Abemaciclib, a cyclin-dependent kinase 4 and 6 (CDK4/6) inhibitor for treating metastatic breast cancer. The trial, registered under ClinicalTrials.gov Identifier \href{https://clinicaltrials.gov/study/NCT02747004}{NCT02747004}, was an open-label, randomized, controlled, phase 2 study conducted at 60 sites across 14 regions/countries from September 2016 to June 2017. A total of 234 eligible patients were enrolled and randomly assigned in a 1:1:1 ratio to one of three groups: abemaciclib (150 mg) plus tamoxifen (78 patients) (group A+T), abemaciclib (150 mg) alone (79 patients) (group A-150), or the reference treatment abemaciclib (200 mg) alone (77 patients) (group A-200). Randomization was stratified by the presence of liver metastases (Yes or No) and prior tamoxifen therapy (Yes or No) in the advanced/metastatic setting. The overall baseline involved patients with heavily pretreated hormone receptor-positive, HER2-negative metastatic breast cancer. The primary endpoint was investigator-assessed progression-free survival (PFS), while Objective Response Rate (ORR) served as a common binary endpoint to assess tumor response. Among all endpoints, ORR is defined as the proportion of patients with a complete or partial response. It is a key and common measure of treatment efficacy in cancer clinical trials, providing insight into the immediate tumor response to therapeutic intervention. In this trial, the ORR was 34.6\% for the abemaciclib plus tamoxifen group, 24.1\% for the abemaciclib 150 mg alone group, and 33.8\% for the abemaciclib 200 mg alone group. 

In this section, we conduct a post-randomization analysis using the newly proposed adjusted test. Our redesigned study utilizes the reported ORR of three treatments and the specified strata factors and their proportions to generate an increased number (N = 1000) of the individual patient data using a logistic regression model. Since the original paper did not provide the odds ratios for the strata factors, we assumed reasonable odds ratios based on published real-world and retrospective studies that provide information on ORR in HR+/HER2- metastatic breast cancer (mBC) patients, specifically examining the impact of liver metastases and prior tamoxifen use. 

For liver metastases, consistent evidence across major trials demonstrated its substantial impact on treatment response. In the MONARCH 3 trial \citep{goetz2017monarch}, patients with liver metastases showed a baseline ORR of 21.1\%. Similar findings were reported in MONALEESA-2 \citep{hortobagyi2016ribociclib} with 22\% ORR, and PALOMA-2 \citep{finn2016palbociclib} with 25.3\% ORR in liver metastasis subgroups. Based on these data, an odds ratio of 0.30 was selected for liver metastases, representing approximately a 70\% reduction in response odds.
For prior tamoxifen exposure, we broaden the scope to prior endocrine therapy exposure, as data specifically related to tamoxifen are limited. The impact of prior endocrine therapy exposure was demonstrated in several CDK4/6 inhibitor trials. In MONARCH 2 \citep{sledge2017monarch}, which evaluated abemaciclib plus fulvestrant in patients with HR+/HER2- advanced breast cancer, all patients had progressed on prior endocrine therapy, and prior tamoxifen use was a stratification factor due to its potential impact on treatment outcomes. Similarly, in PALOMA-3 \citep{turner2015palbociclib}, which included patients who had progressed on prior endocrine therapy, the response rates were analyzed according to the line and type of previous endocrine treatment, supporting its inclusion as a stratification factor. Based on these trials demonstrating the clinical relevance of prior endocrine therapy exposure, an odds ratio of 0.75 was selected as a conservative estimate for the effect of prior tamoxifen use in our model. Hence, the individual patient data are generated based on the following logistic regression: 
\begin{align*}
    \mathbb{P}(\text{Objective Response} = 1 \mid \bm{Z}) = \text{logit}^{-1}\Big(&-0.684 + 0.032 T_{\text{A+T}} -0.462 T_{\text{A-150}}\\& -1.20\ \mathbb{I}(\text{liver metastases} = \text{Yes})\\& -0.29\ \mathbb{I}(\text{prior tamoxifen usage} = \text{Yes})\Big)
\end{align*}
where $T_{\mathrm{A}+\mathrm{T}}$ is a binary indicator vector of length $N$ representing assignment to treatment group $\mathrm{A}+\mathrm{T}$. Specifically, each element of $T_{\mathrm{A}+\mathrm{T}}$ is coded as 1 if the patient is assigned to group $\mathrm{A}+\mathrm{T}$ and 0 otherwise. Similarly, $T_{\mathrm{A}-150}$ is a binary indicator vector indicating assignment to treatment group A-150, where each element is 1 for patients in group A-150 and 0 otherwise. In addition, we denote the covariate vector $Z$ here as including treatment indicators, liver metastasis status, and prior tamoxifen usage.\par
Based on the generated patient data, we apply a newly proposed adjusted inference method, which employs two-sided tests, to redesign and analyze the trial. Specifically, we are comparing the statistical power of the new inference method with that of the standard Wald test to demonstrate the improved performance of the new approach in detecting treatment effects within the context of the trial. 
\begin{table}[t!]
\centering
    \caption{Re-designed Metastatic Breast Cancer Trial (N = 1000)}
    \label{tab:Realtrial}
    \begin{tabular}{c|cccc}
    & \multicolumn{2}{c}{\textbf{Wald Test}} &  \multicolumn{2}{c}{\textbf{Adjusted Test}}\\ 
     Randomization & A+T  &  A-150  & A+T  &  A-150 \\ 
    \hline
        STR-PB & 0.0326 & 0.5438 & 0.1690 & 0.7652\\
        PS &  0.0332 & 0.5492 & 0.1712 & 0.7657\\
        HuHuCAR &  0.0346 & 0.5590 & 0.1720 & 0.7660\\ 
    \end{tabular}
\end{table}

\textcolor{black}{Table \ref{tab:Realtrial} shows that the adjusted test has higher power than the standard Wald test across all three randomization schemes (STR-PB, PS, and HuHuCAR). For A-150, which is the best experimental treatment arm, the improvement is larger. The adjusted test achieves about 0.76 power, compared to about 0.54 for the Wald test. These gains are consistent across randomization schemes, so they do not depend on the specific CAR procedure. Overall, the choice between the Wald and adjusted tests can materially affect the conclusion in this trial setting, since the Wald test is more likely to miss an effect that the adjusted test can detect.}


\section{Discussion}
\label{sec:discussion}
In this article, we proposed a family of robust tests for treatment effects in multi-arm trials within GLMs under CAR. Through the theoretical proof, extensive simulation studies and a real-world trial application, we demonstrate that the proposed inference method could maintain control of the Type I error rate, and also, in some cases, provide improved statistical power. \par 

From a theoretical perspective, we proved that the standard Wald test may not adequately control Type I error because the test statistics may not converge to the standard multivariate normal distribution under the working model and CAR. Our analysis revealed distinct performance patterns across commonly used models: conservative Type I error in logistic regression, valid control in Poisson regression, and inflation in exponential models. Notably, this represents the first theoretical investigation of asymptotic convergence for multi-arm trials under both CAR and GLMs, as previous research primarily focused on linear models. Our findings also demonstrated that the test statistics are correlated, leading us to propose adjusted test statistics that ensure convergence to the standard normal distribution.\par

Our simulation studies corroborated these theoretical findings. The proposed adjusted test maintained family-wise Type I error control across all scenarios, a crucial requirement for post-randomization analysis. This control was particularly important for GLMs that showed inflated family-wise Type I error under the standard Wald test. Power analysis revealed that our adjusted test provided enhanced performance in certain scenarios under the logistic regression model. For the exponential model, we focused solely on the adjusted test's power, as the Wald test's failure to maintain Type I error control made comparisons meaningless. The practical application to a metastatic breast cancer trial validated our theoretical findings and demonstrated potential power advantages, suggesting tangible benefits for real-world clinical trials.\par

\textcolor{black}{While our method performs well in the settings studied, several limitations are worth noting. First, our theoretical results are developed under CAR schemes that stratify on discrete covariates, so the current framework does not directly cover continuous stratification variables without additional modeling or discretization. Second, our analysis is conducted within a specified GLM family, and it relies on using the intended link function in the working model. Third, in scenarios with moderate or strong treatment effects, the adjustment may reduce power compared to the traditional Wald test due to variance inflation. However, when the standard Wald test is conservative or inflated, the adjusted test can provide more reliable inference by bringing Type I error closer to the nominal level. When the Wald test is conservative, this adjustment can also improve power at the nominal level, especially when treatment effects are small. Despite these nuances, the adjusted test remains a useful tool for multi-arm trials under CAR, especially when test statistics are correlated.}\par 

Several directions for future research emerge from this work. An immediate extension would be to expand the current framework beyond CAR designs to incorporate response-adaptive randomization. The existing framework focuses primarily on achieving covariate balance during randomization; however, with the growing emphasis on precision medicine, solely balancing covariates may no longer suffice. Future randomization schemes should also account for individual patient responses to improve trial efficiency and enhance outcome precision. The proposed framework is well-suited for such advancements and can be readily extended to covariate-adjusted response-adaptive randomization, enabling more dynamic and personalized trial designs. Additionally, while the adjusted test was developed within the GLM framework, similar principles could be applied to time-to-event endpoints using a Cox proportional hazards working model, expanding its applicability to survival analysis and other time-to-event scenarios. These extensions represent promising areas for future research and practical development.
\section*{Supplementary Materials}
Additional technical details are provided in the Supplementary Materials. They include the formal definitions of the imbalance measures under multi-arm CAR, auxiliary lemmas, proofs of the main theorems in Section~\ref{sec: theorem}, and derivations of the working variance term under different GLMs.\par



\bibhang=1.7pc
\bibsep=2pt
\fontsize{9}{14pt plus.8pt minus .6pt}\selectfont
\renewcommand\bibname{\large \bf References}
\expandafter\ifx\csname
natexlab\endcsname\relax\def\natexlab#1{#1}\fi
\expandafter\ifx\csname url\endcsname\relax
  \def\url#1{\texttt{#1}}\fi
\expandafter\ifx\csname urlprefix\endcsname\relax\def\urlprefix{URL}\fi

\bibliographystyle{chicago}      
\bibliography{sample2}   







\vskip .65cm
\noindent
Guannan Zhai, Feifang Hu
\vskip 2pt
\noindent
The George Washington University
\vskip 2pt
\noindent
E-mail: (guannanzhai@gwu.edu), (feifang@gwu.edu)
\vskip 2pt

\end{document}